\Crefname{equation}{Eq.}{Eqs.}
\newtheorem{definition}{Definition}[section]
\newtheorem{theorem}{Theorem}[section]
\newtheorem{lemma}{Lemma}[section]
\newtheorem{corollary}{Corollary}[section]
\newcommand{\be}{\begin{equation}} 
\newcommand{\bes}{\begin{equation*}}
\newcommand{\ee}{\end{equation}}
\newcommand{\ees}{\end{equation*}}
\newcommand{\bematrix}{\left(\begin{matrix}}
\newcommand{\ematrix}{\end{matrix}\right)}
\newcommand{\rvline}{\hspace*{-\arraycolsep}\vline\hspace*{-\arraycolsep}}
\newcommand*{\matminus}{%
  \leavevmode
  \hphantom{0}%
  \llap{%
    \settowidth{\dimen0 }{$0$}%
    \resizebox{1\dimen0 }{\height}{$-$}%
  }%
}
\begin{document}

\title{Entangled symmetric states and copositive matrices}
\date{December 11, 2020}
\author{Carlo Marconi}
\affiliation{F\'isica Te\`{o}rica: Informaci\'o i Fen\`{o}mens Qu\`{a}ntics.  Departament de F\'isica, Universitat Aut\`{o}noma de Barcelona, 08193 Bellaterra, Spain}
\author{Albert Aloy}
\affiliation{ICFO - Institut de Ci\`{e}ncies Fot\`{o}niques, The Barcelona Institute of Science and Technology, 08860 Castelldefels (Barcelona), Spain}
\author{Jordi Tura}
\affiliation{Max-Planck-Institut f\"ur Quantenoptik, Hans-Kopfermann-Str. 1, 85748 Garching, Germany}
\affiliation{Instituut-Lorentz, Universiteit Leiden, P.O. Box 9506, 2300 RA Leiden, The Netherlands}
\author{Anna Sanpera}
\affiliation{F\'isica Te\`{o}rica: Informaci\'o i Fen\`{o}mens Qu\`{a}ntics.  Departament de F\'isica, Universitat Aut\`{o}noma de Barcelona, 08193 Bellaterra, Spain}
\affiliation{ICREA, Pg. Llu\'{\i}s Companys 23, 08010 Barcelona, Spain}

\begin{abstract}
	\noindent   
Entanglement in symmetric quantum states and the theory of copositive matrices are intimately related concepts. For the simplest symmetric states, i.e., the diagonal symmetric (DS) states, it has been shown that there exists a correspondence between exceptional (non-exceptional) copositive matrices and non-decomposable (decomposable) entanglement witnesses (EWs). Here we show that EWs of symmetric, but not DS, states can also be constructed from extended copositive matrices, providing new examples of bound entangled symmetric states, together with their corresponding EWs, in arbitrary odd dimension.  

\end{abstract}

\maketitle
	
\noindent 
Entanglement and symmetry lie at the heart of quantum theory. Symmetries reflect fundamental laws of Nature and are intrinsically present in systems of physical interest. Moreover, states possessing some symmetry, typically admit a simplified mathematical description as compared to the one of generic states, a fact that usually translates into a more feasible way of characterizing their properties.

Quantum correlations are an intrinsic property of composite systems. Entanglement, in particular, is regarded as the most significant feature of quantum physics, not only because it provides unique insights into the fundamental principles of our physical world, but also because it represents a resource that allows to perform several tasks that would be, otherwise, impossible.

Since the birth of quantum information theory, huge efforts have been devoted to characterize and quantify entanglement (see e.g. \cite{horodecki2009quantum}). Along the years, it has become clear that entanglement characterization is a challenging task. Moreover, it cannot be quantified by a unique measure. The exception lies in (bipartite) pure entangled states where it is trivial to determine if the state is entangled and entanglement entropy is the unique measure needed. Interestingly, in the asymptotic limit, for a sufficient number of copies of the system, the entanglement entropy measures the resource interconversion capacity between different pure states, within the paradigm of local operations and classical communication \cite{bennett1996concentrating}. However, already in the case of bipartite mixed states, two of such measures are needed in order to quantify this interconversion rate: the entanglement of formation and the entanglement of distillation.

A closely related, although inherently different, approach is the characterization of entangled states independently of any measure or their usefulness for a specific task. This problem has been shown to be, in general, NP-hard \cite{gurvits2003classical}. However, partial characterization has been achieved by means of criteria that provide necessary, but not sufficient, conditions to determine if a given state is entangled or not. The most powerful of such criteria, formulated in terms of linear positive maps, is the positivity under partial transposition (PPT) \cite{peres1996separability}, which is the paradigmatic example of a positive, but not completely positive, map. States that do not fulfill the PPT criterion are entangled but the converse is not true, except for few cases. In this regard, quantum maps and their associated entanglement witnesses (EWs), provide the strongest criteria for entanglement characterization: a quantum state is entangled if, and only if, there exists an EW that detects it \cite{terhal2000entanglement,lewenstein2000optimization, chruscinski2014entanglement}. Crucially, in order to characterize entanglement in states that do not break the PPT criterion (PPT entangled states or PPTES), it is necessary to construct non-decomposable EWs \cite{lewenstein2001characterization}. Interestingly, EWs have been shown to provide also a measure of entanglement which is upper and lower bounded by other entanglement measures \cite{brandao2005quantifying}. 

Nowadays, it is still unclear whether, in general, the problem of entanglement characterization remains equally hard for systems displaying some symmetries \cite{vollbrecht2001entanglement,toth2010separability,terhal2000entanglement,eggeling2001separability}. A possible approach in this direction is to investigate if, and how, symmetries can help to construct EWs for such systems.

In this work, we focus on the entanglement characterization of permutationally invariant systems and, more specifically, on the class of the so-called symmetric states. These provide a natural description for sets of indistinguishable particles, i.e., bosons. To this aim, we derive a method to construct specific EWs for such states using the theory of copositive matrices.  
As shown previously by some of us \cite{tura2018separability}, the characterization of entanglement for some particularly simple symmetric states that are mixtures of projectors on symmetric states, also called DS states, can be recast as the problem of checking membership to the cone of the so-called completely-positive matrices or to its dual, the cone of copositive matrices. The equivalence between these two problems allows to establish a correspondence between entanglement characterization for DS states and non-convex quadratic optimization \cite{tura2018separability}. In particular, it provides a very efficient method to detect states that are PPT entangled. Here, we further extend this mapping by constructing EWs that detect symmetric PPTES that are not DS. First, we derive under which conditions a copositive matrix leads to an EW for symmetric states. Then, we show explicitly how to derive, from such copositive matrices, both decomposable and non-decomposable EWs. Finally, we use these EWs to generate several families of symmetric PPTES. It is important to remark that, to the best of our knowledge, there are not known examples of non-decomposable EWs for generic symmetric states. The very few examples of symmetric PPTES in the literature\cite{toth2010separability}, have been found numerically using weaker entanglement criteria like, e.g., the range criterion applied to edge states\cite{lewenstein2001characterization,sanpera2001schmidt, clarisse20065, kye2012classification, chen2011description, magne2010numerical}. Our work thus, offers a complementary approach to the study of entanglement characterization in symmetric states.
The paper is organized as follows: in \Cref{sec:Preliminaries}, we introduce basic concepts concerning the definition and properties of symmetric states in $\mathbb C^{d}\otimes \mathbb C^{d}$, and the mapping between copositive matrices and EWs. 
In this section, we review as well some of our previous results for DS states \cite{tura2018separability} that represent the starting point of our work. In \Cref{sec:EWcopositive}, we show how to derive decomposable and non-decomposable EWs for symmetric states from copositive matrices and provide some explicit constructive examples. 
In \Cref{sec:symBoundEntanglement}, we focus on symmetric, but not DS, PPTES, showing the existence of PPTES in arbitrary dimensions and introducing examples of such states in different dimensions ($d=3,4,5,7$). Finally, in \Cref{sec:Conclusions} we summarize our findings, present some conjectures derived from our analysis and list some open questions for further research.

\section{Basic concepts}	
\label{sec:Preliminaries}

 We start by introducing the notation used throughout the manuscript along with some basic concepts and definitions regarding symmetric states, entanglement witnesses and copositive matrices.

\subsection{Symmetric states}
Henceforth, we focus on bipartite systems. 
Let $\mathcal H=\mathbb{C}^{d}\otimes \mathbb{C}^{d}$, be the finite dimensional Hilbert space of two qudits, and $\mathcal{B}(\mathcal{H})$ the set of its bounded operators. The symmetric subspace $\mathcal {S} \subset\mathcal{H}$, is the convex set formed by the (normalized) pure states $\ket{\Psi_S}\in \mathcal H$ that are invariant under the exchange of parties. Symmetric states can be mapped to spin systems and, moreover, they span the subspace of maximum spin in the Schur-Weyl representation. The so-called Dicke states form a convenient basis of the symmetric subpace, i.e.,  $\mathcal{S}\equiv \text{span} \{\ket{D_{ij}}\}$, where $\ket{D_{ij}} =\ket{D_{ji}}\equiv(\ket{ij} + \ket{ji})/\sqrt{2}$ for $i \neq j,\; \ket{D _{ii}} \equiv \ket{ii}$ and $\{\ket{i}\}_{i=0}^{d-1}$ is an orthonormal basis of $\mathbb{C}^{d}$. 
Note the reduced dimensionality of the symmetric subspace, i.e., $\dim(\mathcal{S})=d(d+1)/2 <d^2$. In an abuse of language we denote symmetric quantum states, $\rho_S\in \mathcal{B}(\mathcal{S})$, as the convex hull of projectors onto pure symmetric normalized states, i.e.,
$\rho_{S} =\sum_{k} p^{(k)}_{S} \ket{\Psi^{(k)}_S}\bra{\Psi^{(k)}_S}$, with $p^{(k)}_{S} \geq 0$, $\sum_{k} p^{(k)}_{S} =1$ and 
$\ket{\Psi^{(k)}_{S}}=\sum_{ij} c^{(k)}_{ij} \ket{D_{ij}}$, $c^{(k)}_{ij}\in \mathbb{C} $.\\ Thus, any $\rho_S\in \mathcal{B}(\mathcal{S})$ is a positive semidefinite operator ($\rho_S\succeq 0$) with unit trace ($\mathrm{Tr} (\rho_S) = 1$), fulfilling $\Pi_{S} \rho_{S} \Pi_{S} = \rho_{S} \Pi_{S} = \Pi_{S} \rho_{S}= \rho_{S}$, where $\Pi_{S} = \frac{1}{2}(\mathds{1} + F)$ is the projector onto the symmetric subspace and $F = \sum_{i,j = 0}^{d-1} \ket{ij}\!\bra{ji}$ is the flip operator. Using the Dicke basis, symmetric quantum states can be compactly 
expressed as follows: 
 {\definition{
 \label{def:SS}
Any bipartite symmetric state,  $\rho_{S}\in \mathcal{B}(\mathcal{S})$, can be written as 
\begin{equation}
\rho_{S} = \sum_{\substack{0\leq i\leq j<d\\
                  0\leq k\leq l<d}}
               \left(\rho_{ij}^{kl} \ket{D_{ij}}\bra{D_{kl}} + \emph{h.c.} \right)~,
\end{equation}
with $ \rho_{ij}^{kl} \in \mathbb{C}$. Notice that, due to the symmetry of the Dicke states, it holds that $ \rho_{ij}^{kl} = \rho_{ji}^{kl}= \rho_{ij}^{lk}= \rho_{ji}^{lk} ~\forall i,j,k,l$.\\
}}

\noindent Convex mixtures of projectors onto Dicke states are denoted as diagonal symmetric (DS) states, since they are diagonal in the Dicke basis. They form a convex subset of $\mathcal{S}$ and are particularly relevant for our analysis.
{ \definition{
 \label{def:DSS}
Any DS state, $\rho_{\small{DS}} \in \mathcal{B}(\mathcal{S})$, is of  the form
 \begin{equation}
  \rho_{DS} = \sum_{0 \leq i \leq j < d} p_{ij}\ket{D_{ij}}\bra{D_{ij}},
  \label{eq:DSS}
 \end{equation}
with $p_{ij} \geq 0,\; \forall\; i,j$ and $\sum_{ij} p_{ij}=1$.}}

 \begin{lemma}
 Every symmetric state, $\rho_{S} \in \mathcal{B}(\mathcal{S})$, can be written as the sum of a DS state, $\rho_{DS}$, and a traceless symmetric contribution, $\sigma_{CS}$, which contains all coherences between Dicke states, i.e.,
\begin{equation}
\label{sym}
\rho_{S} = \rho_{DS}+\sigma_{CS}= 
\sum_{0\le i \le j < d} p_{ij} \ket{D_{ij}} \bra{D_{ij}} + \underset{{(i,j)\neq(k,l)}}{\sum_{ij}\sum_{kl}} \left( {\alpha_{ij}^{kl}} \ket{D_{ij}} \bra{D_{kl}} + \emph{h.c.} \right),
\end{equation}
\noindent with  ${\alpha_{ij}^{kl}} \in \mathbb{C}$ and ${\alpha_{ij}^{kl}} = ({\alpha_{kl}^{ij}})^*$.
\end{lemma}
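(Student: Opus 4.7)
The plan is to start from the general form of a symmetric state given in Definition~\ref{def:SS} and simply split the double sum into its diagonal and off-diagonal parts, then verify that each piece has the claimed properties. Concretely, writing $\rho_S = \sum_{i\le j,\,k\le l}\rho_{ij}^{kl}\ket{D_{ij}}\bra{D_{kl}} + \text{h.c.}$, I would separate the terms with $(i,j)=(k,l)$ from those with $(i,j)\neq(k,l)$, defining $p_{ij}:=\rho_{ij}^{ij}$ and $\alpha_{ij}^{kl}:=\rho_{ij}^{kl}$ for $(i,j)\neq(k,l)$.

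The three items to check are then (i) that the $p_{ij}$ are non-negative and sum to one, so that the diagonal piece is a legitimate DS state, (ii) that the off-diagonal piece is traceless, and (iii) that the hermiticity constraint $\rho_S=\rho_S^\dagger$ forces $\alpha_{ij}^{kl}=(\alpha_{kl}^{ij})^*$. Point (i) follows from $p_{ij}=\bra{D_{ij}}\rho_S\ket{D_{ij}}\ge 0$ (positivity of $\rho_S$) together with the orthonormality of the Dicke basis, which gives $\mathrm{Tr}(\rho_S)=\sum_{i\le j}p_{ij}=1$. Point (ii) is immediate: $\mathrm{Tr}(\ket{D_{ij}}\bra{D_{kl}})=\delta_{ik}\delta_{jl}$ on the unordered index pairs $i\le j$, $k\le l$, so only the diagonal contributions carry trace, and $\sigma_{CS}$ is traceless. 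Point (iii) is a direct comparison of coefficients after taking the adjoint of the expansion, using that $(\ket{D_{ij}}\bra{D_{kl}})^\dagger=\ket{D_{kl}}\bra{D_{ij}}$.

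I do not anticipate any genuine obstacle: the statement is essentially a bookkeeping reorganization of Definition~\ref{def:SS}. The only mildly delicate point is keeping track of the index conventions — the sum runs over ordered pairs $i\le j$, $k\le l$, and one has to be careful that the symmetries $\rho_{ij}^{kl}=\rho_{ji}^{kl}=\rho_{ij}^{lk}=\rho_{ji}^{lk}$ noted in Definition~\ref{def:SS} make the \emph{hermitian conjugate} term in the decomposition well defined and consistent with identifying the off-diagonal part with a sum of the form $\sum_{(i,j)\neq(k,l)}(\alpha_{ij}^{kl}\ket{D_{ij}}\bra{D_{kl}}+\text{h.c.})$ without double counting. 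Once the index set is pinned down, the decomposition and all of its properties follow by inspection.
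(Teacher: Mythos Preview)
Your proposal is correct. The paper does not supply a proof of this lemma at all; it is stated immediately after Definitions~\ref{def:SS} and~\ref{def:DSS} and treated as evident bookkeeping, which is exactly the approach you outline: split the expansion of Definition~\ref{def:SS} into its diagonal and off-diagonal parts in the Dicke basis, then read off positivity and normalisation of the $p_{ij}$, tracelessness of $\sigma_{CS}$, and the hermiticity constraint on the $\alpha_{ij}^{kl}$.
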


\subsection{ Separability, EWs and copositivity}

{\definition{A bipartite symmetric state $\rho_{S} \in \mathcal{B}(\mathcal{S})$ is separable (not entangled) if it can be written as a convex combination of projectors onto pure symmetric product states, \textit{i.e.,}}
\begin{equation}
\rho_{S}= \sum_{i} p_i \ket{e_{i}e_{i}}\bra{e_{i}e_{i}},
\label{eq:sep}
\end{equation}
with $p_i\geq 0$, $\sum_i p_i=1$ and $\ket{e_{i}}=\sum_{i} e_{i}^{(k)}\ket{k}$, where $e_{i}^{(k)} \in \mathbb{C}$ and $\{\ket{k}\}_{k=0}^{d-1}$ is an orthonormal basis in $\mathbb{C}^d$. If a decomposition of this form does not exist, then $\rho_{S}$ is \textit{entangled}.}

\begin{figure}[H]
	\centering
	\includegraphics[width=4cm]{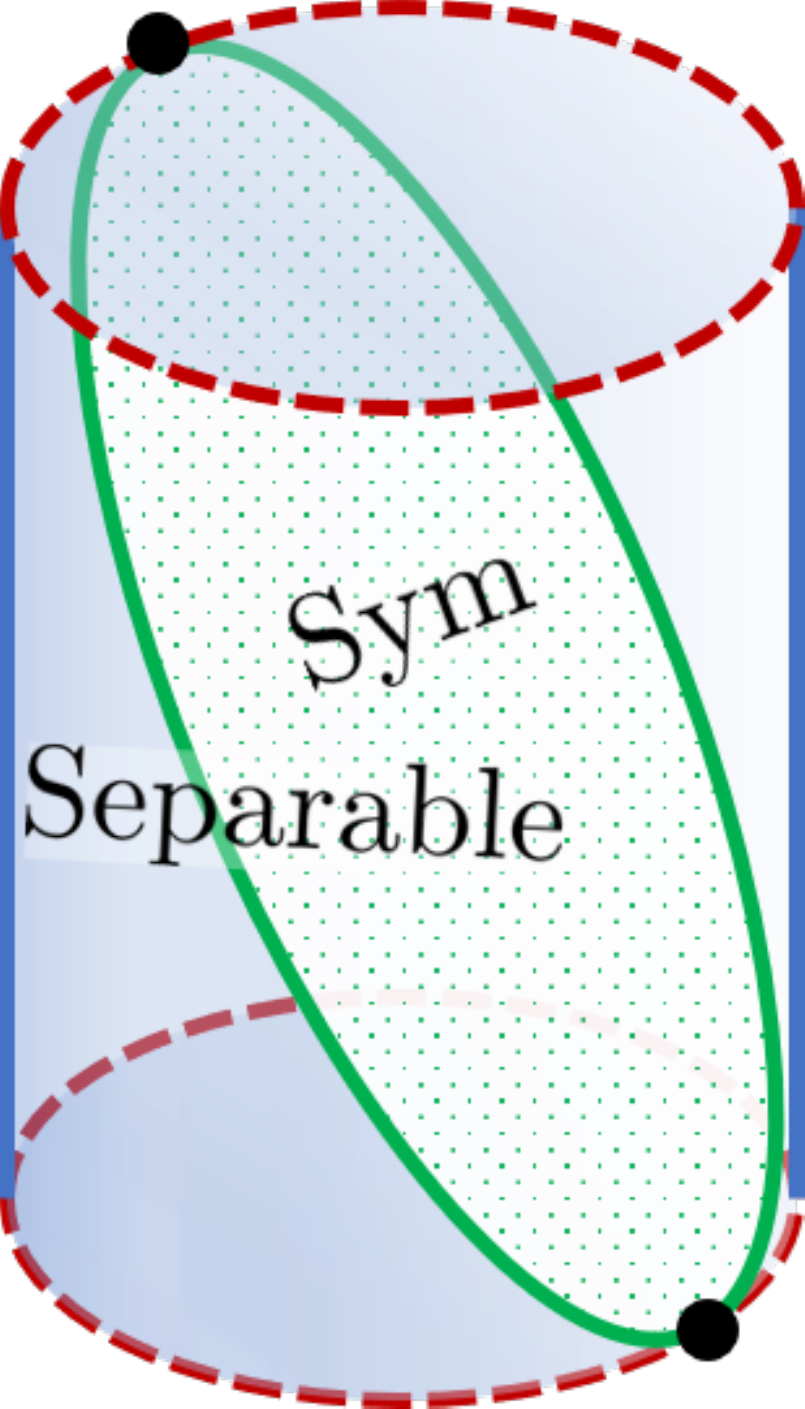}
	\caption{Pictorial representation of the set of bipartite symmetric separable states $\mathcal{D}_S$ (Sym) embedded into the set of bipartite separable states. The cylinder represents the separable set $\mathcal{D}$. The discontinuous (red) line corresponds to the extremal points (of the form $\ket{e,f}$) generating the set and the continuous (blue and green) lines corresponds to the respective boundaries (necessarily requiring description as density matrices with rank $>1$ but not maximal). Both the separable and the symmetric separable sets share extremal points of the form $\ket{e,e}$, here represented by the black dots.}
	\label{fig:separable}
\end{figure}

We denote by ${\mathcal D}$, the compact set of separable quantum states and by $\mathcal D_{S}$, its analogous symmetric counterpart, which is also compact (see Fig.\ref{fig:separable}). As a consequence of the Hahn-Banach theorem, the set $\mathcal D_{S}$ admits also a dual description in terms of its dual cone, $\mathcal{P_{S}}$, defined as the set of the operators $W$ fulfilling
\be
\mathcal {P}_{S}= \{ W=W^{\dagger}\;\;\mbox{s.t}\; \langle W, \rho \rangle  \geq 0 ~, \forall \rho_{S} \in \mathcal{D}_{S}\}~,
\ee
where $\langle W, \rho \rangle \equiv \mbox{Tr} (W^{\dagger}\rho)$ is the Hilbert-Schmidt scalar product.

{\definition{A Hermitian operator, $W \in \mathcal P_{S}$, is an entanglement witness (EW) of symmetric states if, and only if, it satisfies the following properties: 
\begin{itemize}
	\item[1.]  $\emph{Tr}(W\rho_{S})\geq 0, ~ \forall   \;\rho_{S}\in\mathcal D_{S}$ ,~
	\item[2.] There exists at least one symmetric state $\rho_{S}$ such that $\emph{Tr}(W \rho_{S}) < 0$~. 
\end{itemize}}}

\noindent Notice that, by definition, the set of separable symmetric states, $\mathcal D_{S}$, satisfies the inclusion $\mathcal D_{S}\subset \mathcal D$, but $\mathcal P \subset \mathcal P_S$, where $\mathcal {P}$ is the dual cone of the convex set $\mathcal D$, i.e., 
\begin{equation}
    \mathcal {P}= \{ W=W^{\dagger}\;\; \mbox{s.t}. \; \langle W, \rho \rangle \geq 0~, \forall \rho \in \mathcal{D}\}~.
\end{equation}

In other words, any EW acting on $\mathcal{H}$ that detects an entangled state belongs to $\mathcal P_S$, but the converse is not necessarily true (see Fig.\ref{fig:planetary}).
\begin{figure}[H]
	\centering
	\includegraphics[width=7cm]{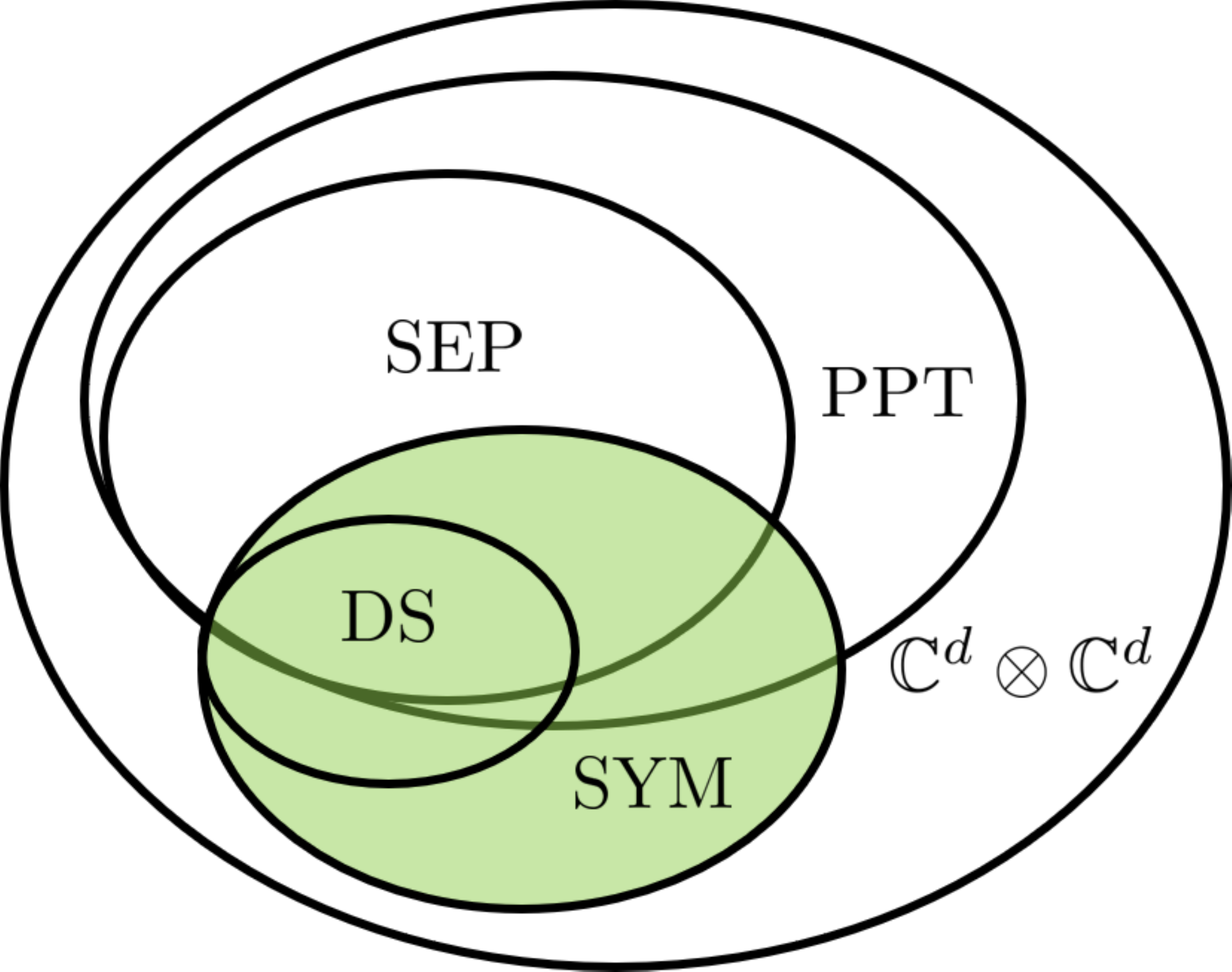}
	\caption{Pictorial structure of the quantum states in $\mathbb{C}^d\otimes\mathbb{C}^d$ for $d>5$. Each set contains the sets displayed inside. The colored region (green) represents the set of symmetric states (SYM). Note that, while for $d>5$ there exist diagonal symmetric (DS) states that are PPT-entangled, as represented in the figure, for $d<5$ all PPT-entangled DS state are necessarily separable (SEP) (see the text for details).}
	\label{fig:planetary}
\end{figure}
\noindent EWs are either decomposable or non-decomposable.
\begin{definition}
\label{dec}
An EW, $W$, is said to be decomposable (non-decomposable) if it can (cannot) be written as
\begin{equation}
W = P + Q^{T_{B}}, 
\end{equation}
with $P,Q \succeq 0$. Here $T_{B} \equiv \mathds{1}_{A} \otimes T$ denotes the partial transposition w.r.t. subsystem $B$, where $T$ stands for the usual matrix tranposition.\\
\end{definition}
\noindent It is easy to show that non-decomposable EWs are the only candidates to detect PPT entanglement. In fact, given a PPTES $\rho$, for any decomposable EW $W$, it is
\begin{equation}
\mbox{Tr}(W \rho) = \mbox{Tr}(P \rho) + \mbox{Tr}(Q^{T_{B}}\rho) = \mbox{Tr}(P \rho) + \mbox{Tr}(Q \rho^{T_{B}}) \geq 0~,
\end{equation}
\noindent where we have used the properties of the trace and the positive semidefiniteness of the operators $P$ and $Q$.\\
\noindent In particular from Def.(\ref{dec}) it follows that a EW is non-decomposable iff it detects at least one PPTES.\\

Remarkably, despite the apparent simplicity of the symmetric subspace due to its reduced dimensionality ($d(d+1)/2$ instead of $d^{2}$), entanglement characterization remains, in general, an open problem. For generic symmetric states, sparsity is preserved when the state is expressed in the computational basis but it is lost when the partial transposition is performed. However, for DS states, the corresponding partial transpose remains highly sparse and can be reduced to an associated matrix, $M_d(\rho_{DS})$, of dimensions $d \times d$, while generically $\rho_{S}^{T_{B}}$ is a matrix of dimension $d^2 \times d^2$.
{\definition {The partial transpose of every $\rho_{DS} \in \mathcal{B}(\mathcal{S})$ has the form
\begin{equation}
\label{transco}
\rho_{DS}^{T_{B}} = M_{d}(\rho_{DS}) \bigoplus_{0\le i \neq j < d} \left(p_{ij}/2\right)~,
\end{equation}
where  $M_{d}(\rho_{DS})$, which arises from the partially transposed matrix of a DS state in the computational basis, is defined as the $d\times d$ matrix with (non-negative) entries
\begin{equation}
\label{mr}
M_{d}(\rho_{DS}) :=
\bematrix
p_{00} 	& p_{0 1}/2  & \cdots & p_{0, d-1}/2 \\
p_{0 1}/2  & p_{11} & \cdots & p_{1, d-1}/2 \\
\vdots 	& \vdots & \ddots & \vdots \\
p_{0, d-1}/2 & p_{1, d-1}/2 & \cdots & p_{d-1, d-1} 
\ematrix ~.
\end{equation}}}

As shown in previous works \cite{yu2016separability,tura2018separability}, 
deciding if a DS state $\rho_{DS}$ is separable, is equivalent to check the membership of $M_{d}( \rho_{DS})$ to the cone of completely positive matrices $\mathcal {CP}_{d}$, i.e., the cone formed by those $d \times d$ matrices $A_{d}$ that admit a decomposition  of the type $A _{d}= B B^{T}$, where $B$ is a $d \times k$ matrix, for some $k>1$, with $B_{ij}\geq 0, B_{ij} \in \mathbb{R}$. Thus, if $\rho_{DS}$ is separable, then its associated matrix of Eq.(\ref{mr}) must satisfy $M_{d}(\rho_{DS})=B B^{T}$ \cite{yu2016separability}. This correspondence can be recast, equivalently, in the dual cone of $\mathcal {CP}_{d}$, i.e., in the cone  $\mathcal {COP}_{d}$ of copositive matrices.  As a result, copositive matrices act effectively as EWs for DS states. 
Below we provide the definition of a copositive matrix together with some properties that will be useful in the following.  

{\definition{ A real symmetric matrix, $H$, is copositive if, and only if, $\vec{x}^{T} H \vec{x} \geq 0,~ \forall \vec{x} \geq 0$ component-wise}}.\\

It is easy to see that the diagonal elements of a copositive matrix must be non negative, i.e., $H_{ii} \ge 0$, while negative elements $H_{ij}$ must fulfill $\sqrt{H_{ii} H_{jj}} \ge -H_{ij}$. Clearly, every positive semidefinite matrix is also copositive but the converse is, generically, not true. In fact, testing membership to the cone of copositive matrices is known to be a co-NP-hard problem \cite{murty1985some}\footnote{The co-NP problems are the complementary of the decision problems in NP.}, and only for $d\leq 4$, copositivity can be assessed analytically \cite{ping1993criteria,hiriart2010variational}. 

Finally, among copositive matrices, we distinguish extreme and exceptional copositive matrices that stand out for their impossibility to be decomposed.

{\definition{ A $d\times d$ copositive matrix $H$ is said to be extreme if $H = H_{1} + H_{2}$ with $H_{1}, H_{2}$ copositive, implies $H_{1} = a H, H_{2} = (1-a)H$ for all $a \in [0,1]$}}. 

{\definition{ A $d\times d$ copositive matrix $H$ is said to be \textit{exceptional} if, and only if, $H$ cannot be decomposed as the sum of a positive semidefinite matrix ($\mathcal{PSD}_{d}$), and a symmetric entry-wise non-negative matrix ($\mathcal{N}_{d}$),  i.e., $H \in \mathcal {COP}_{d} \setminus (\mathcal{PSD}_{d}+ \mathcal{N}_{d})$}}\\

\noindent Remarkably, it has been shown that for $d < 5$ there are no exceptional copositive matrices, 
meaning that, in this case, $ \mathcal {COP}_{d} = \mathcal{PSD}_{d}+ \mathcal{N}_{d}$ \cite{diananda1962non}. In
Fig.\ref{fig:cop} we illustrate, schematically, the relation between the aforementioned classes of copositive matrices.

\begin{figure}[H]
	\centering
	\includegraphics[width=6.5cm]{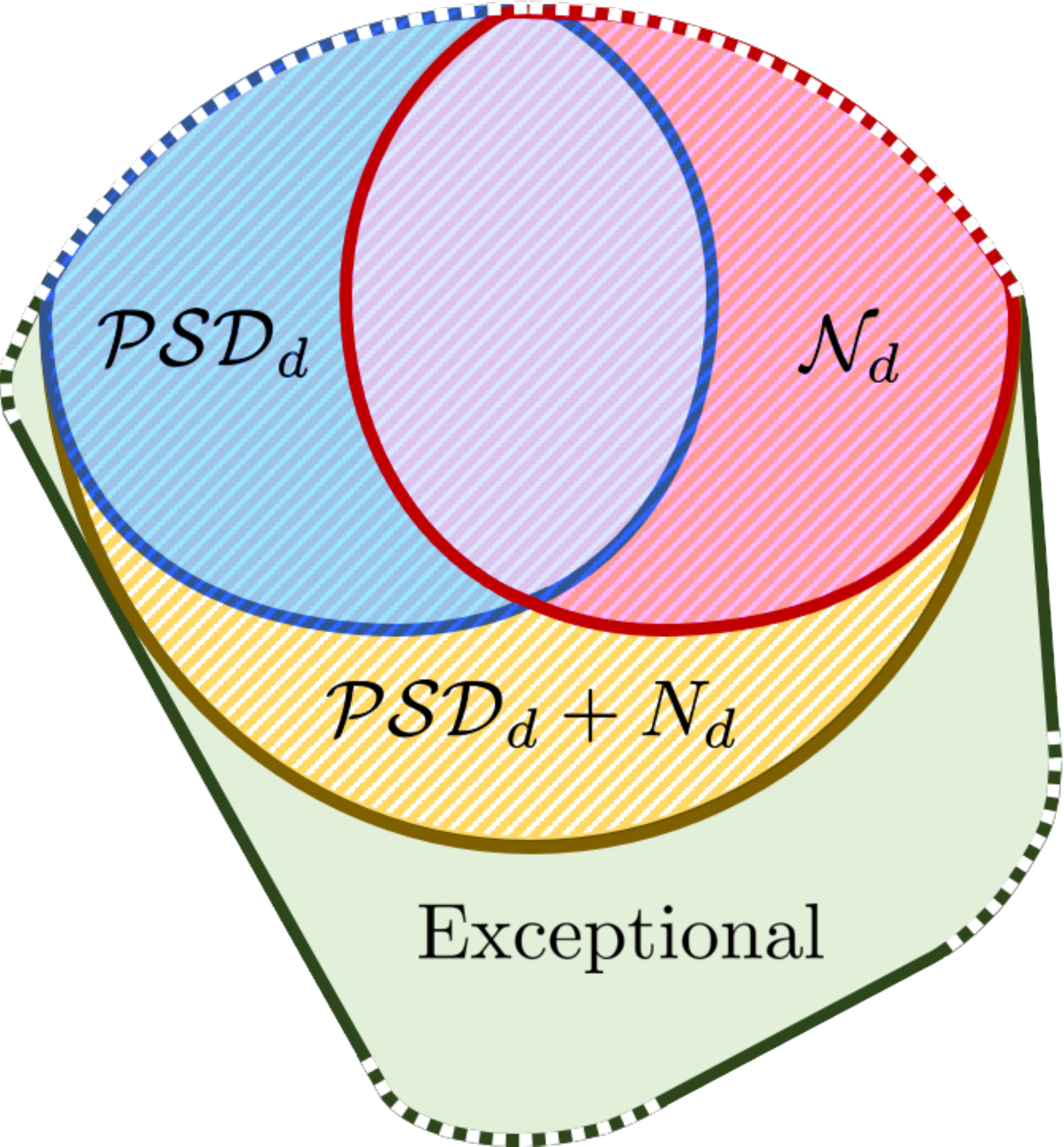}
	\caption{Pictorial representation of the cone of copositive $\mathcal{COP}_{d}$ and the cones $\mathcal{PSD}_{d}$ and $\mathcal{N}_{d}$. The striped region has been overmagnified for clarity and represents the convex hull of the cones $\mathcal{PSD}_{d}$ and $\mathcal{N}_{d}$, denoted as $\mathcal{PSD}_{d}+\mathcal{N}_{d}$. Note that exceptional copositive matrices exist only for $d>5$ (green). Extremal copositive matrices lie at the border of the cone $\mathcal{COP}_{d}$.}
	\label{fig:cop}
\end{figure}
 
\section {Copositive matrices as EWs}
\label{sec:EWcopositive}

Using the above definitions we can now show which copositive matrices lead to EWs. We prove explicitly how to construct a decomposable (non-decomposable) EW from a non-exceptional (exceptional) copositive matrix. Since decomposable EWs cannot detect bound entanglement, one is tempted to believe that separability in the symmetric subspace is equivalent to the analysis of exceptional copositive matrices. However, as we shall see later, this is not necessarily the case, and non-exceptional copositive matrices also play a relevant role in detecting bound entanglement.   
Our findings are summarized in the following theorems.

\begin{theorem}
\label{thext}
Each copositive matrix $H =\sum_{i,j=0}^{d-1}  H_{ij} \ket{i}\!\bra{j}$, 
with at least one negative entry $H_{mn}=H_{nm}<0$ ($m\neq n$), leads to an EW on $\mathcal{S}$ of the form $W=(H^{ext})^{T_{B}} =\sum_{{i,j=0}}^{d-1} H_{ij}\ket{ij}\!\bra{ji}$.
\end{theorem}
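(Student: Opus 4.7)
The plan is to verify the two defining conditions of an EW on $\mathcal S$ given in the paper. First I would unpack the notation: the natural reading of $H^{ext}$ is the $d^2\times d^2$ diagonal embedding $H^{ext} = \sum_{i,j} H_{ij}\ket{ii}\bra{jj}$, for which partial transposition on the second subsystem indeed yields $W=\sum_{i,j}H_{ij}\ket{ij}\bra{ji}$, matching the expression in the statement. Hermiticity of $W$ follows because $H$ is a real symmetric matrix: swapping the dummy indices in $W^\dagger$ brings it back to $W$.

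For condition 1, $\mathrm{Tr}(W\rho_S)\geq 0$ on $\mathcal D_S$, I would use the fact that a symmetric separable state is a convex combination of projectors $\ket{e,e}\bra{e,e}$, so it suffices to check the single extreme case $\ket{e}=\sum_k e_k\ket{k}$. A direct computation gives
\begin{equation}
\bra{e,e}W\ket{e,e} \;=\; \sum_{i,j} H_{ij}\, \langle e,e\ket{ij}\!\langle ji\ket{e,e} \;=\; \sum_{i,j} H_{ij}\,|e_i|^2|e_j|^2 \;=\; \vec{x}^{\,T} H \vec{x},
\end{equation}
with $x_i:=|e_i|^2\geq 0$. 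Since the $x_i$ can be any non-negative reals (by choice of $\ket e$), copositivity of $H$ gives exactly the required inequality $\bra{e,e}W\ket{e,e}\geq 0$.

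For condition 2, the hypothesis $H_{mn}=H_{nm}<0$ suggests testing $W$ against the Dicke state $\ket{D_{mn}}=(\ket{mn}+\ket{nm})/\sqrt2$. A quick computation shows $W\ket{mn}=H_{nm}\ket{nm}$ and $W\ket{nm}=H_{mn}\ket{mn}$, so $\ket{D_{mn}}$ is an eigenvector of $W$ with eigenvalue $H_{mn}<0$. Hence $\mathrm{Tr}(W\ket{D_{mn}}\bra{D_{mn}})=H_{mn}<0$, and $\ket{D_{mn}}\bra{D_{mn}}\in\mathcal B(\mathcal S)$, establishing that $W$ detects at least one symmetric state and therefore qualifies as an EW on $\mathcal S$.

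There is no real mathematical obstacle here beyond index bookkeeping: the entire proof amounts to the observation that sandwiching $W$ between product vectors $\ket{e,e}$ converts the EW condition into $\vec{x}^{\,T}H\vec{x}\geq 0$ on the non-negative orthant, which is \emph{exactly} the definition of copositivity, while the negativity of $H_{mn}$ is carried directly onto the Dicke state $\ket{D_{mn}}$ by the structure of $W$. The slightly delicate point is to check that $|e_i|^2$ as $\ket e$ ranges over $\mathbb C^d$ realizes every non-negative vector in $\mathbb R^d$ (which it does), so that copositivity is not only sufficient but morally the right condition to impose on $H$.
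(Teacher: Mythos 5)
Your proof is correct and follows essentially the same route as the paper: copositivity of $H$ yields $\bra{e,e}W\ket{e,e}=\vec{x}^{\,T}H\vec{x}\ge 0$ with $x_i=|e_i|^2$, and the negative entry $H_{mn}$ is detected by the Dicke state $\ket{D_{mn}}$, exactly as in the paper's argument (which merely packages the second step as a full diagonalization of $W$ followed by projection onto $\mathcal S$).
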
 

\begin{proof} 

(i) We extend $H$ to the symmetric subspace as $H^{ext} = \sum_{i,j= 0}^{d-1} H_{ij} \ket{i}\bra{j} \otimes \ket{i}\bra{j}$, and denote $W=(H^{ext} )^{T_B}$. It is straightforward to show that $\mbox{Tr}(W \rho_{S}) \ge 0$ for all $ \rho_{S} \in \mathcal{D}_{S}$ since  $\bra{e e} W \ket{e e} = \braket{e e^{*}| H^{ext}| e e^{*}}=\sum_{ij}  |c_{i}|^{2} H_{ij} |c_{j}|^{2}= \vec{x}^{T} H \vec{x} \geq 0$, for every state $\ket{e} = \sum_{i=0}^{d-1} c_{i} \ket{i}$, where $c_{i}\in \mathbb C$, and $\{\ket{i} \}_{i=0}^{d-1}$ is orthonormal basis of $\mathbb{C}^{d}$.

(ii) The diagonalization of $W$  shows that its eigenvectors are $\{ \ket{ii}, \ket{\psi_{ij}^{\pm}} = (\ket{ij}\pm \ket{ji})/\sqrt{2}\}$, with corresponding eigenvalues $\{H_{ii}, \pm H_{ij}\}$, i.e.,
\begin{equation}
\label{wtotal}
\begin{split}
	W &= (H^{ext})^{T_{B}}= \sum_{i=0}^{d-1} H_{ii} \ket{ii}\bra{ii} 
	+\\
	&+\sum_{i < j }^{d-1} H_{ij} \ket{\psi_{ij}^{+}}\bra{\psi_{ij}^{+}} - \sum_{i < j }^{d-1} H_{ij} \ket{\psi_{ij}^{-}}\bra{\psi_{ij}^{-}},
\end{split}
\end{equation}
where $\ket{\psi_{ij}^{+}}=\ket{D_{ij}}$ and  $\ket{ii}=\ket{D_{ii}}$.
Notice that the $d(d-1)/2$ eigenvectors corresponding to the projectors $\ket{\psi_{ij}^{-}}\bra{\psi_{ij}^{-}}$, are orthogonal to the symmetric subspace and, therefore, can be discarded by projecting on the symmetric subspace ($\mathcal{S}$). 
\begin{equation}
\label{wsym}
	W _{S}= \Pi_{S}W\Pi_{S}=\sum_{i=0}^{d-1} H_{ii} \ket{D_{ii}}\bra{D_{ii}} 
	+ \sum_{i < j}^{d-1} H_{ij} \ket{D_{ij}}\bra{D_{ij}}. 
\end{equation}
Finally, since copositivity requires that $H_{ii}\geq 0$ $\forall i$,  $W_{S}$ is an EW iff at least one of the remaining eigenvalues is negative, i.e., if $H$ has at least one negative element $H_{mn}=H_{nm}<0$ for some $m\neq n$. It is now trivial to see that  $W_{S}$ indeed detects, at least, the entangled state $\ket{\psi_{mn}^{+}}$ since $\mbox{Tr} (W_{S}\ket{\psi_{mn}^{+}}\bra{\psi_{mn}^{+}} ) = H_{mn} <0$. To conclude, if $W_{S}$ is an EW in the symmetric subspace, so it is $W$ given by \Cref{wtotal}.
\end{proof}

\begin{theorem}
	\label{the2}
	{If $H=H_{\mathcal{N}} + H_{\mathcal{PSD}}$ (i.e., H is non-exceptional) with at least one negative element, then  $W=\Pi_{S} \left( H^{ext}_{\mathcal{N}} \right) ^{T_{B}}\Pi_{S} + \Pi_{S}\left( H^{ext}_{\mathcal{PSD}} \right) ^{T_{B}}\Pi_{S}=\Pi_{S} \left( H^{ext}_{\mathcal{N}} \right) ^{T_{B}}\Pi_{S} +\left( H^{ext}_{\mathcal{PSD}} \right) ^{T_{B}}$ is a decomposable EW. The converse is also true, that is, if $W_S=\Pi_S(H^{ext})^{T_{B}}\Pi_{S}=P+Q^{T_{B}}$ with $P,Q\succeq 0$, then $H = H_{\mathcal{N}} + H_{\mathcal{PSD}}$}.
	\end{theorem}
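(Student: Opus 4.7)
The strategy is to prove both implications separately, with the converse being the more delicate one.

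\textbf{Forward direction.} Given $H = H_{\mathcal{N}} + H_{\mathcal{PSD}}$, I would exploit linearity of $H \mapsto H^{ext}$ and of the partial transpose to rewrite the second form of $W$ as $W = P + Q^{T_B}$, with $P := \Pi_S (H^{ext}_{\mathcal{N}})^{T_B} \Pi_S$ and $Q := H^{ext}_{\mathcal{PSD}}$. Positive semidefiniteness of $P$ follows by applying Eq.~\eqref{wsym} to $H_{\mathcal{N}}$: the resulting spectral decomposition is a non-negative combination of mutually orthogonal Dicke projectors, since all entries of $H_{\mathcal{N}}$ are non-negative. For $Q$, I would take a real spectral decomposition $H_{\mathcal{PSD}} = \sum_k \vec v_k \vec v_k^{\,T}$ and observe that $Q = \sum_k \ket{V_k}\bra{V_k}$ with $\ket{V_k} := \sum_i (\vec v_k)_i \ket{ii}$, hence $Q \succeq 0$. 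This certifies $W$ as decomposable, while Theorem~\ref{thext} applied to $H$ (using its hypothesised negative entry) ensures that $W$ is a genuine EW.

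\textbf{Converse direction.} Suppose $W_S = \Pi_S (H^{ext})^{T_B} \Pi_S = P + Q^{T_B}$ with $P, Q \succeq 0$. From Eq.~\eqref{wsym}, the entries of $H$ can be read off as $H_{ii} = \bra{ii} W_S \ket{ii}$ and $H_{ij} = \bra{D_{ij}} W_S \ket{D_{ij}}$ for $i < j$, and I would expand each via $W_S = P + Q^{T_B}$. The correct split of $H$ is
\begin{equation*}
(H_{\mathcal{N}})_{ii} := \bra{ii} P \ket{ii}, \qquad (H_{\mathcal{N}})_{ij} := \bra{D_{ij}} P \ket{D_{ij}} + \tfrac{1}{2}\bigl(\bra{ij} Q \ket{ij} + \bra{ji} Q \ket{ji}\bigr) \quad (i \neq j),
\end{equation*}
which is entry-wise non-negative because every term is a diagonal element of a PSD operator, so $H_{\mathcal{N}} \in \mathcal{N}_d$. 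Setting $H_{\mathcal{PSD}} := H - H_{\mathcal{N}}$ and invoking the identity
\begin{equation*}
(\ket{D_{ij}}\bra{D_{ij}})^{T_B} = \tfrac{1}{2}\bigl(\ket{ij}\bra{ij} + \ket{ji}\bra{ji} + \ket{ii}\bra{jj} + \ket{jj}\bra{ii}\bigr) \quad (i \neq j),
\end{equation*}
the ``flipped'' contributions cancel, leaving $(H_{\mathcal{PSD}})_{ij} = \operatorname{Re} \bra{ii} Q \ket{jj}$. Thus $H_{\mathcal{PSD}} = \operatorname{Re}(\tilde Q)$, where $\tilde Q_{ij} := \bra{ii} Q \ket{jj}$ is the principal submatrix of $Q$ on $\mathrm{span}\{\ket{ii}\}_{i=0}^{d-1}$; since $\tilde Q \succeq 0$ and $\vec v^{\,T} \operatorname{Re}(\tilde Q) \vec v = \vec v^{\,T} \tilde Q \vec v \geq 0$ for real $\vec v$, we conclude $H_{\mathcal{PSD}} \in \mathcal{PSD}_d$.

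\textbf{Main obstacle.} The converse is the non-trivial direction. A naive assignment of $H_{\mathcal{N}}$ entirely from $P$ and $H_{\mathcal{PSD}}$ entirely from $Q$ fails, because $\bra{D_{ij}} Q^{T_B} \ket{D_{ij}}$ combines two structurally distinct contributions: entry-wise non-negative ``transport'' terms $\tfrac{1}{2}(\bra{ij} Q \ket{ij} + \bra{ji} Q \ket{ji})$ that belong in $\mathcal{N}_d$, and a ``coherence'' term $\operatorname{Re}\bra{ii} Q \ket{jj}$ that assembles into the real part of a principal submatrix of $Q$ and therefore forms the PSD piece. Recognising this split -- and in particular that the coherence matrix inherits positive semidefiniteness from a principal submatrix of $Q$ -- is the crux of the argument.
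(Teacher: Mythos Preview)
Your forward direction is essentially the paper's: both set $P=\Pi_S(H^{ext}_{\mathcal N})^{T_B}\Pi_S$ and $Q=H^{ext}_{\mathcal{PSD}}$, check $P,Q\succeq 0$, and invoke Theorem~\ref{thext} for the EW property.

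The converse, however, follows a genuinely different route. The paper does \emph{not} use the hypothesised $P,Q$ at all: it picks one negative entry $H_{mn}<0$, sets $H_{\mathcal{PSD}}$ equal to the $2\times 2$ block
\[
\begin{pmatrix} H_{mm} & H_{mn}\\ H_{mn} & H_{nn}\end{pmatrix}
\]
(PSD by the copositivity bound $\sqrt{H_{mm}H_{nn}}\ge -H_{mn}$), and declares the remainder to be $H_{\mathcal N}$. This is elementary and explicit, but as written it only goes through when $(m,n)$ is the \emph{unique} negative off-diagonal pair; otherwise the residual matrix still has negative entries and the claim ``clearly $P\succeq 0,\ P\in\mathcal N$'' is not justified. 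Your argument instead genuinely exploits the given decomposition $W_S=P+Q^{T_B}$: by expanding $\bra{D_{ij}}Q^{T_B}\ket{D_{ij}}$ you separate the entry-wise non-negative ``diagonal'' contributions of $Q$ (which join $P$'s contributions in $H_{\mathcal N}$) from the coherences $\operatorname{Re}\bra{ii}Q\ket{jj}$, and recognise the latter as the real part of the principal submatrix of $Q$ on $\mathrm{span}\{\ket{ii}\}$, hence PSD. This yields a clean dictionary between operator decompositions of $W_S$ and matrix decompositions of $H$, and it works regardless of how many negative entries $H$ has. The paper's construction buys simplicity in the single-negative-entry case; your approach buys generality and makes the structural correspondence transparent.
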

Notice that it is always possible to decompose a copositive matrix in such a way that the semidefinite positive part $H^{ext}_{\mathcal{PSD}}$ is symmetric.
\begin{proof}
$\Rightarrow$. Assume that $H = 
H_{\mathcal{N}}+ H_{\mathcal{PSD}}$, with $H_{\mathcal{N}} \in \mathcal{N}, ~H_{\mathcal{PSD}} \in \mathcal{PSD}$ and $H_{mn}=H_{nm}<0$. Then
$(H_{\mathcal{PSD}})_{mn} =(H_{\mathcal{PSD}})_{nm}<0$
by construction and $(H^{ext}_{\mathcal{PSD}})^{T_{B}} \nsucceq 0$, but $\Pi_{S} \left( H^{ext}_{\mathcal{N}} \right) ^{T_{B}}\Pi_{S}\succeq 0$. 
The operator $W=\Pi_{S} \left( H^{ext}_{\mathcal{N}} \right) ^{T_{B}}\Pi_{S} + \left( H^{ext}_{\mathcal{PSD}} \right) ^{T_{B}}$ is a decomposable EW: (i) $\braket{e,e|W| e,e} \ge 0$ because $H$ is copositive, (ii) $W$ has at least one negative eigenvalue associated to the negative element $(H_{\mathcal{PSD}})_{mn}$, (iii) $W$ is decomposable since it can be written as $W= P + Q^{T_{B}}$ with $P =\Pi_{S} \left( H^{ext}_{\mathcal{N}} \right) ^{T_{B}}\Pi_{S} \succeq 0 $ and $Q = \Pi_{S} H^{ext}_{\mathcal{PSD}}   \Pi_{S} = H^{ext}_{\mathcal{PSD}}  \succeq 0$.

$\Leftarrow$ Since $W=\Pi_{S} \left( H^{ext}\right)^{T_{B}}\Pi_{S}$ is an EW, there exists at least one negative element of $H$, $H_{mn}=H_{nm}<0$.  We construct 
$Q=H_{mm}|mm\rangle\langle mm| +H_{nn}|nn\rangle\langle nn|+H_{mn}(|mn\rangle\langle mn| +|nm\rangle\langle nm|)$. By construction $Q$ is symmetric and is $Q\notin\mathcal{N}$, $Q^{T_B}\nsucceq 0$ and $Q\succeq 0$, where the last inequality holds because copositivity of $H$ implies  $\sqrt{H_{mm} H_{nn}} \ge -H_{mn}$. Hence, we can identify $Q=H^{ext}_{\mathcal{PSD}}$. Now, define $P=\Pi_S (\sum_{i,j\neq m,n} H_{ij}|ij\rangle\langle ji|)\Pi_S$, which can be expressed as $P=\sum_{i,j\neq m,n} H_{ii}|ii\rangle\langle ii|+\sum_{i,j\neq m,n}
(H_{ij}/2) (|ij\rangle\langle ji|+ |ij\rangle\langle ij|+|ji\rangle\langle ji|+|ji\rangle\langle ij|)$. Clearly  $P\succeq 0, P\in\mathcal{N}$, so that 
$P= \Pi_S (H^{ext}_{\mathcal{N}})^{T_B}\Pi_S$ and
$H = H_{\mathcal{N}} + H_{\mathcal{PSD}}$.
\end{proof}

Let us illustrate \Cref{the2} by considering  
the following copositive matrix in $d=3$ 
\begin{equation}
	H = 
	\begin{pmatrix*}[r]
	1 & 1 & 1 \\
	1 & 1 & -1 \\
	1 & -1 & 1 
	\end{pmatrix*}~.
\end{equation}
A possible  decomposition of $H = H_{\mathcal{PSD}}  + H_{\mathcal{N}}$, with $H_{\mathcal{PSD}} \in \mathcal{PSD}_{3}$ and $ H_{\mathcal{N}} \in\mathcal{N}_{3}$, is given by:
\begin{equation}
H_{\mathcal {PSD}} = 
\begin{pmatrix*}[r]
0 & 0 & 0 \\
0 & 1 & -1 \\
0 & -1 & 1 
\end{pmatrix*}, \quad
H_{\mathcal{N}} = 
\begin{pmatrix*}[r]
1 & 1 & 1 \\
1 & 0 & 0 \\
1 & 0 & 0 
\end{pmatrix*}~.
\end{equation}
\noindent  The associated EW $W= P + Q^{T_{B}}$, with $P = \Pi_{S}(H_{\mathcal{N}}^{ext})^{T_{B}}\Pi_{S}$, and $Q =H_{\mathcal{PSD}}^{ext}$ reads

\begin{equation*}
P = \frac{1}{2}
\begin{pmatrix}
\begin{matrix}
2 & 0 & 0 \\
0 & 1 & 0 \\
0 & 0 & 1 
\end{matrix}
& \rvline & 
\begin{matrix}
0 & 0 & 0 \\
1 & 0 & 0 \\
0 & 0 & 0 
\end{matrix} 
& \rvline & 
\begin{matrix}
0 & 0 & 0 \\
0 & 0 & 0 \\
1 & 0 & 0  
\end{matrix} \\
\hline
\begin{matrix}
0 & 1 & 0 \\
0 & 0 & 0 \\
0 & 0 & 0 
\end{matrix} 
& \rvline &
\begin{matrix}
1 & 0 & 0 \\
0 & 0 & 0 \\
0 & 0 & 0 
\end{matrix}
& \rvline & 
\begin{matrix}
0 & 0 & 0 \\
0 & 0 & 0 \\
0 & 0 & 0  
\end{matrix} \\
\hline
\begin{matrix}
0 & 0 & 1 \\
0 & 0 & 0 \\
0 & 0 & 0 
\end{matrix} 
& \rvline &
\begin{matrix}
0 & 0 & 0 \\
0 & 0 & 0 \\
0 & 0 & 0 
\end{matrix}
& \rvline & 
\begin{matrix}
1 & 0 & 0 \\
0 & 0 & 0 \\
0 & 0 & 0  
\end{matrix} 
\end{pmatrix}~,\quad Q =
\begin{pmatrix}
\begin{matrix*}[c]
0 & 0 & 0 \\
0 & 0 & 0 \\
0 & 0 & 0 
\end{matrix*}
& \rvline & 
\begin{matrix*}[c]
0 & 0 & 0 \\
0 & 0 & 0 \\
0 & 0 & 0 
\end{matrix*} 
& \rvline & 
\begin{matrix*}[c]
0 & 0 & 0 \\
0 & 0 & 0 \\
0 & 0 & 0  
\end{matrix*} \\
\hline
\begin{matrix}
0 & 0 & 0 \\
0 & 0 & 0 \\
0 & 0 & 0 
\end{matrix} 
& \rvline &
\begin{matrix}
0 & 0 & 0 \\
0 & 1 & 0 \\
0 & 0 & 0 
\end{matrix}
& \rvline & 
\begin{matrix}
~0 & 0 & 0 \\
~0 & 0 & \matminus1 \\
~0 & 0 & 0  
\end{matrix} \\
\hline
\begin{matrix}
0 & 0 & 0 \\
0 & 0 & 0 \\
0 & 0 & 0 
\end{matrix} 
& \rvline &
\begin{matrix}
0 & 0 & 0 \\
0 & 0 & 0 \\
0 & \matminus1 & 0 
\end{matrix}
& \rvline & 
\begin{matrix}
0 & 0 & 0 \\
0 & 0 & 0 \\
0 & 0 & 1  
\end{matrix} 
\end{pmatrix}~,
\end{equation*}
\noindent but the resulting EW, $W'=P'+Q'^{T_{B}}$, detects exactly the same states in the symmetric subspace.

The link between non-exceptional copositive matrices and decomposable EWs extends also to exceptional copositive matrices and non-decomposable EWs in the symmetric subspace.
\begin{theorem}
\label{nondeco}
 Associated to each exceptional copositive matrix $H$ (i.e., $H \in \mathcal {COP} \setminus (\mathcal{PSD}+ \mathcal{N})$) with at least one negative entry, there is a non-decomposable EW, $W=(H^{ext})^{T_{B}}$, able to detect symmetric PPTES.
 \end{theorem}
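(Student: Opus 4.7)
The plan is to deduce \Cref{nondeco} from \Cref{thext}, the contrapositive of the converse direction of \Cref{the2}, and the standard equivalence recalled just below Definition~\ref{dec}: an EW is non-decomposable if and only if it detects at least one PPTES. The first step is immediate: since $H$ is copositive with at least one negative off-diagonal entry, \Cref{thext} already gives that $W=(H^{ext})^{T_{B}}$ is an EW on $\mathcal{S}$.

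The central step is to prove that $W$ cannot be decomposed as $P+Q^{T_{B}}$ with $P,Q\succeq 0$. I would argue by contradiction. Suppose such a decomposition exists. Sandwiching both sides by $\Pi_{S}$ yields
\begin{equation*}
\Pi_{S}W\Pi_{S}=\Pi_{S}P\Pi_{S}+\Pi_{S}Q^{T_{B}}\Pi_{S}.
\end{equation*}
Clearly $P':=\Pi_{S}P\Pi_{S}\succeq 0$. For the second summand I would use the identity $\Pi_{S}Q^{T_{B}}\Pi_{S}=\bigl(\Pi_{S}^{T_{B}}Q\,\Pi_{S}^{T_{B}}\bigr)^{T_{B}}$ together with the Hermiticity of $\Pi_{S}^{T_{B}}$ (inherited from that of $\Pi_{S}$), so that $Q':=\Pi_{S}^{T_{B}}Q\,\Pi_{S}^{T_{B}}\succeq 0$. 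The resulting decomposition $\Pi_{S}W\Pi_{S}=P'+Q'^{T_{B}}$ with $P',Q'\succeq 0$ is exactly the hypothesis of the converse direction of \Cref{the2}, which forces $H\in\mathcal{PSD}_{d}+\mathcal{N}_{d}$, contradicting the exceptionality of $H$. Hence $W$ is non-decomposable.

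To close, I need to produce a \emph{symmetric} PPTES detected by $W$. I would apply the same non-decomposability reasoning to $W_{S}=\Pi_{S}W\Pi_{S}$: it is an EW on $\mathcal{S}$ since it agrees with $W$ on symmetric states, and it is non-decomposable by the identical $\Pi_{S}$-sandwich argument. Standard duality between the cone of decomposable EWs on $\mathcal{S}$ and the cone of symmetric PPT states then guarantees a symmetric PPTES $\tilde{\rho}$ with $\mathrm{Tr}(W_{S}\tilde{\rho})<0$; using the block-diagonal structure $W=W_{S}\oplus W_{A}$ visible in \Cref{wtotal}, one has $\mathrm{Tr}(W\tilde{\rho})=\mathrm{Tr}(W_{S}\tilde{\rho})<0$, so $W$ itself detects $\tilde{\rho}$.

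The step I expect to be the main obstacle is the non-decomposability proof, specifically the algebraic identity $\Pi_{S}Q^{T_{B}}\Pi_{S}=(\Pi_{S}^{T_{B}}Q\,\Pi_{S}^{T_{B}})^{T_{B}}$ and the preservation of positive semidefiniteness under the Hermitian sandwich by $\Pi_{S}^{T_{B}}$. Once that reduction to \Cref{the2} is in place, the rest amounts to invoking standard duality arguments between non-decomposable witnesses and PPT entangled states.
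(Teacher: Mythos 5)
Your reduction hinges on the identity $\Pi_{S}Q^{T_{B}}\Pi_{S}=\bigl(\Pi_{S}^{T_{B}}Q\,\Pi_{S}^{T_{B}}\bigr)^{T_{B}}$, and this identity is false: the partial transpose is not multiplicative over operator products. A minimal counterexample occurs already for $d=2$ with $Q=\ket{01}\bra{01}$. On one side, $\Pi_{S}Q^{T_{B}}\Pi_{S}=\Pi_{S}\ket{01}\bra{01}\Pi_{S}=\tfrac{1}{2}\ket{D_{01}}\bra{D_{01}}$, a rank-one projector onto an entangled vector. On the other side, since $\Pi_{S}^{T_{B}}=\tfrac{1}{2}(\mathds{1}+d\ket{\phi^{+}}\bra{\phi^{+}})$ with $\ket{\phi^{+}}$ maximally entangled and $\braket{\phi^{+}|01}=0$, one gets $\Pi_{S}^{T_{B}}Q\,\Pi_{S}^{T_{B}}=\tfrac{1}{4}\ket{01}\bra{01}$, which is invariant under $T_B$. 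The two sides disagree, so from $W=P+Q^{T_{B}}$ you cannot conclude that $\Pi_{S}W\Pi_{S}$ is again of the form $P'+Q'^{T_{B}}$ with $P',Q'\succeq 0$; in fact $(\Pi_{S}Q^{T_{B}}\Pi_{S})^{T_{B}}$ is generally not positive semidefinite even for $Q\succeq 0$ (in the example it has eigenvalue $-1/4$ on $\mathrm{span}\{\ket{00},\ket{11}\}$). This breaks the central step of your contradiction argument --- exactly the step you flagged as the likely obstacle. A secondary looseness is the closing appeal to ``standard duality'': the dual of the decomposable cone is the PPT cone of the full space, so the PPTES it produces need not be symmetric without an extra argument.

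Both problems are repaired at once by running the duality at the level of the $d\times d$ matrices rather than of the witnesses, which is the mechanism the paper's own (terse) proof implicitly relies on. Since $\mathcal{PSD}_{d}+\mathcal{N}_{d}$ is a closed convex cone whose dual is $\mathcal{PSD}_{d}\cap\mathcal{N}_{d}$ (the doubly nonnegative matrices), exceptionality of $H$ yields, by the bipolar theorem, some $M\in\mathcal{PSD}_{d}\cap\mathcal{N}_{d}$ with $\mathrm{Tr}(HM)<0$. Normalizing $M$ and taking the DS state $\rho_{DS}$ with $M_{d}(\rho_{DS})=M$ as in Eq.~(\ref{mr}), the conditions $M\in\mathcal{N}_{d}$ and $M\succeq 0$ give respectively $\rho_{DS}\succeq 0$ and, via Eq.~(\ref{transco}), $\rho_{DS}^{T_{B}}\succeq 0$, while $\mathrm{Tr}(W\rho_{DS})=\mathrm{Tr}(HM)<0$. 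Thus $W$ detects a symmetric (indeed DS) PPT state, which is therefore a symmetric PPTES, and by the criterion stated below \Cref{dec}, $W$ is non-decomposable. Your first step --- that $W$ is an EW by \Cref{thext} --- is correct and should be kept.
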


\begin{proof}
For any $H \in \mathcal {COP} \setminus (\mathcal{PSD}+ \mathcal{N})$, $H$ always admits a decomposition of the form  $H= H_{\mathcal{N}}+ H_{\star}$, where $H_{\mathcal{N}}$ is a non-negative symmetric matrix and $H_{\star}$ has at least one negative eigenvalue but is not positive semidefinite. The associated EW $W=P+Q^{T_{B}}$ 
with $P=\Pi_{S}( H^{ext}_{\mathcal{N}})^{T_{B}}\Pi_{S}$ and $Q= H^{ext}_{\star}$, is a non-decomposable EW since $P\succeq 0$ but $Q \nsucceq 0$. The operator $W=(H^{ext})^{T_{B}}$ is also a non-decomposable EW. 
\end{proof}

\begin{corollary}[From \cite{tura2018separability}]
	\label{d4}
	Since for $d<5$ every copositive matrix is not exceptional (i.e., $H = H_{\mathcal{PSD}}+ H_{\mathcal{N}}$), all EWs of DS states in $d=3$ and $d=4$ are decomposable.
\end{corollary}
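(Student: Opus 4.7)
The plan is to close the logical loop between the three ingredients developed in the excerpt: the CP--COP duality that identifies the separability cone of DS states with $\mathcal{CP}_d$, Diananda's theorem stating that $\mathcal{COP}_d = \mathcal{PSD}_d + \mathcal{N}_d$ whenever $d\le 4$, and \Cref{the2}, which certifies that any EW built from a non-exceptional copositive matrix is decomposable. The corollary then follows essentially by assembling these three facts.

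Concretely, I would proceed as follows. First, recall the correspondence reviewed just before \Cref{thext}: membership of $M_d(\rho_{DS})$ in $\mathcal{CP}_d$ is equivalent to separability of $\rho_{DS}$, so by conic duality (Hahn--Banach applied to the compact convex set of separable DS states), every EW $W$ that detects some DS state can be associated to a copositive matrix $H\in\mathcal{COP}_d$ in the precise sense of \Cref{thext}, namely $W$ may be taken of the form $(H^{ext})^{T_B}$ up to symmetrization onto $\mathcal{S}$ (and the witness has at least one negative entry, else it would not detect any state). Second, invoke Diananda's result (cited in the paper) to conclude that for $d\in\{3,4\}$ one can write $H = H_{\mathcal{PSD}} + H_{\mathcal{N}}$ with $H_{\mathcal{PSD}}\in\mathcal{PSD}_d$ and $H_{\mathcal{N}}\in\mathcal{N}_d$. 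Third, apply the $\Rightarrow$ direction of \Cref{the2}: this decomposition of $H$ yields $W = P + Q^{T_B}$ with $P = \Pi_S (H^{ext}_{\mathcal{N}})^{T_B}\Pi_S\succeq 0$ and $Q = H^{ext}_{\mathcal{PSD}}\succeq 0$, which is exactly the definition of a decomposable EW given in \Cref{dec}.

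The main obstacle I expect is the first step, i.e.\ making precise the claim that \emph{every} EW of DS states arises from a copositive matrix $H$ in the form prescribed by \Cref{thext}, rather than only the family explicitly constructed there. One has to argue that EWs outside this family either have a DS-restriction that coincides with one already in the family or fail to detect any DS state at all; equivalently, only the block of $W$ acting on the Dicke diagonal/off-diagonal pattern $\{\ket{D_{ij}}\!\bra{D_{ij}}\}$ contributes to $\mathrm{Tr}(W\rho_{DS})$, and this block is faithfully encoded by the $d\times d$ matrix $H$ via \Cref{mr}. Once this identification is justified, the rest of the argument is immediate, so the corollary reduces to citing Diananda's theorem and \Cref{the2}.
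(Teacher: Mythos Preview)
Your proposal is correct and follows precisely the logic the paper intends: the corollary is stated without a separate proof, merely as the conjunction of Diananda's theorem with \Cref{the2} and the $\mathcal{CP}$--$\mathcal{COP}$ duality for DS states established in \cite{tura2018separability}; the paper simply remarks afterwards that this ``rephrases the fact that PPT criterion is necessary and sufficient'' for DS states in $d<5$. Your identification of the only nontrivial step---showing that the DS-restriction of an arbitrary EW is captured by a copositive $H$ via the diagonal Dicke-block---is exactly the content of the cited duality, and your sketch of it is sound.
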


The above corollary rephrases the fact that PPT criterion is necessary and sufficient to assess separability for bipartite DS states $\rho_{DS} \in \mathcal{B}(\mathbb C^d\otimes \mathbb C^d)$ for $d<5$.

\section {Symmetric PPTES}
\label{sec:symBoundEntanglement}
\noindent 
Let us briefly summarize what we have seen so far. The fact that each DS state, $\rho_{DS}\in \mathcal{B}(\mathbb C^d\otimes \mathbb C^d)$, is associated to a matrix $M_d(\rho_{DS})$ (see  Eq.\eqref{mr}), allows to reformulate the problem of entanglement characterization as the equivalent problem of checking the membership of $M_d(\rho_{DS})$ to the cone of completely positive matrices $\mathcal{CP}_d$. Equivalently, according to the dual formulation, any entangled state $\rho_{DS}$ is detected by an EW $W$ which can be constructed from a copositive matrix $H$. PPT entangled diagonal symmetric states (PPTEDS) can only be detected by non-decomposable EWs, which correspond to exceptional copositive matrices. Since for $d<5$, all copositive matrices $H$ are of the form $H = H_{\mathcal{PSD}}+ H_{\mathcal{N}}$, all EWs defined as $W=(H^{ext})^{T_B}$ are necessarily of the form $W=P+Q^{T_B}$, with $P,Q\succeq 0$, meaning that for $d<5$ there are not PPTEDS.

 However, for $d>5$, this is not the case, since there exist exceptional copositive matrices, i.e., $H \notin \mathcal{PSD}_{d}+ \mathcal{N}_{d}$. Thus, detecting entanglement of $\rho_{DS}$ in $d\geq 5$, is equivalent to checking membership of the corresponding copositive matrix $H \in \mathcal {COP}_{d} \setminus (\mathcal{PSD}_{d}+ \mathcal{N}_{d})$, which is, in general, a co-NP-hard problem \cite{murty1985some}. 
 
What can we say about symmetric PPTES $\rho_S$ that are not DS? In this section, we tackle the problem of entanglement detection for generic states  $\rho_{S} \in \mathcal{B}(\mathbb{C}^{d}\otimes \mathbb{C}^{d})$ in arbitrary dimension $d$. 
Following the argument given above, we split our analysis in two different scenarios, namely when $d\ge 5$ and $d<5$. Remarkably, even outside of the DS paradigm, we find that copositive matrices lie at the core of non-decomposable EWs for symmetric PPTES in arbitrary dimensions. 

\subsection {Symmetric PPTES in $d\ge 5$}

\noindent 
The fact that for $d\geq 5$ there exist exceptional copositive matrices with at least one negative entry which lead to non-decomposable EWs in $\mathcal{S}$, implies that (i) such EW can detect a PPTEDS, and (ii) the same EW is able to detect other symmetric, but not DS, PPTES "around" it.

\begin{theorem}
	\label{th}
	Let $\rho_{DS}$ be a PPTEDS. Then any symmetric state $\rho_{S}  = \rho_{DS} + \sigma_{CS}$, such that 
	$\rho_{S}^{T_B} \geq 0$, is PPT entangled.
\end{theorem}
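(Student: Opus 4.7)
The plan is to exploit the fact, established in Theorem \ref{nondeco}, that a PPTEDS $\rho_{DS}$ is necessarily detected by a non-decomposable EW of the form $W=(H^{ext})^{T_B}$ with $H$ exceptional copositive. The key observation that makes the theorem essentially free is that, once projected onto the symmetric subspace, such a $W$ becomes \emph{diagonal in the Dicke basis}, so it is blind to the coherence part $\sigma_{CS}$.

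Concretely, I would proceed as follows. First, invoke Theorem \ref{nondeco} to fix a non-decomposable witness $W=(H^{ext})^{T_B}$ with $\mathrm{Tr}(W\rho_{DS})<0$. Next, recall from \Cref{wsym} that
\begin{equation*}
W_{S}\;\equiv\;\Pi_{S}W\Pi_{S}\;=\;\sum_{i=0}^{d-1} H_{ii}\,\ket{D_{ii}}\!\bra{D_{ii}}\;+\;\sum_{i<j} H_{ij}\,\ket{D_{ij}}\!\bra{D_{ij}},
\end{equation*}
so $W_{S}$ is diagonal in the Dicke basis. Since $\rho_{S}$ lives in the symmetric subspace, $\mathrm{Tr}(W\rho_{S})=\mathrm{Tr}(W_{S}\rho_{S})$; decomposing $\rho_{S}=\rho_{DS}+\sigma_{CS}$ as in \Cref{sym} and using that $\sigma_{CS}$ contains only off-diagonal Dicke terms gives $\mathrm{Tr}(W_{S}\sigma_{CS})=0$. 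Therefore
\begin{equation*}
\mathrm{Tr}(W\rho_{S})\;=\;\mathrm{Tr}(W_{S}\rho_{DS})\;=\;\mathrm{Tr}(W\rho_{DS})\;<\;0,
\end{equation*}
so $W$ also detects $\rho_{S}$, which is thus entangled. Combined with the hypothesis $\rho_{S}^{T_{B}}\succeq 0$, this proves $\rho_{S}$ is PPT entangled.

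The argument reduces to two small checks: (a) the witness inherited from the exceptional copositive $H$ is diagonal in the Dicke basis after projection (already in \Cref{wsym}), and (b) $\sigma_{CS}$ is by construction traceless against any Dicke-diagonal operator (immediate from \Cref{def:SS} and the decomposition in \Cref{sym}). There is no genuine obstacle; the only subtle point is to make sure that one is allowed to replace $W$ by $W_{S}$ inside the trace, which is justified by $\Pi_{S}\rho_{S}\Pi_{S}=\rho_{S}$. No assumption on the magnitude of $\sigma_{CS}$ is needed: the coherence block is simply invisible to $W_{S}$, so the entire perturbation region compatible with the PPT constraint automatically inherits PPT entanglement from $\rho_{DS}$.
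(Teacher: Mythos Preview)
Your argument is correct and matches the paper's own proof: both take the non-decomposable witness $W=(H^{ext})^{T_B}$ associated to an exceptional copositive $H$ detecting $\rho_{DS}$ and use that $\mathrm{Tr}(W\sigma_{CS})=0$, which you justify explicitly via the Dicke-diagonal form of $W_S$ in \Cref{wsym} while the paper simply writes $\mathrm{Tr}(W\rho_S)=\mathrm{Tr}(W\rho_{DS})=\mathrm{Tr}(H\,M_d(\rho_{DS}))<0$. One small caveat on attribution: the existence of such an exceptional $H$ for a \emph{given} PPTEDS really comes from the $\mathcal{CP}_d$--$\mathcal{COP}_d$ duality together with \Cref{the2}, not from \Cref{nondeco}, which runs in the opposite direction.
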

\begin{proof} 
Since $\rho_{DS}$ is a PPTEDS state there exists an exceptional copositive matrix $H$ and an associated non
decomposable EW $W$ such that $\mbox{Tr}(W \rho_{DS})<0$. It follows that $\mbox{Tr} (W\rho_{S})= \mbox{Tr}(W(\rho_{DS}+\sigma_{CS}))=\mbox{Tr}(W \rho_{DS})=\mbox{Tr}(H M_{d}(\rho_{DS}) ) <0$, so that $\rho_{S}$ is PPT entangled.
\end{proof}

\noindent The paradigmatic example of an exceptional copositive matrix in $d=5$, is the so-called Horn matrix \cite{hall1963copositive} which is the matrix associated to the quadratic form $\vec{x}^{T} H \vec{x} = (x_{1}+x_{2}+x_{3}+x_{4}+x_{5})^2 - 4x_{1}x_{2}- 4x_{2}x_{3}- 4x_{3}x_{4} - 4x_{4}x_{5}- 4x_{5}x_{1}$~. 
Exceptional copositive matrices of the Horn type,  $H_{\mathcal{H}}$, can be generated for any odd $d\ge 5$ \cite{johnson2008constructing}, and are of the form
\begin{equation}
\label{eq:hornodd}
H_{\mathcal{H}} = 
\begin{pmatrix*}[r]
1  & -1 & 1 & 1 & \cdots & \cdots & 1 & 1 & 1 & -1 \\
-1 & 1  & -1 & 1 & \ddots & & \ddots & 1 & 1 & 1 \\
1 & -1 & 1 & -1 & 1 &  &  & \ddots &  1 & 1 \\
1 & 1 & -1 & 1 & -1 & \ddots & & & \ddots &  1 \\
\vdots & \ddots & 1 & -1 & 1 & \ddots & & & & \vdots \\
\vdots & & & \ddots & \ddots & \ddots & \ddots & \ddots & & \vdots\\
1 & \ddots & & & & \ddots & 1 & -1 & 1 & 1 \\
1 & 1 & \ddots & & & \ddots & -1 & 1 & -1 & 1 \\
1 & 1 & 1 & \ddots & & & 1 & -1 & 1 & -1 \\
-1 & 1 & 1 & 1 & \cdots & \cdots & 1 & 1 & -1 & 1
\end{pmatrix*}~.
\end{equation}
\noindent Since the $H_{\mathcal{H}}$ is exceptional and has negative entries, it leads to a non-decomposable EW $W= (H^{ext})^{T_{B}}$, that can be used to detect PPTEDS in any odd dimension $d \ge 5$. Moreover, due to Th.(\ref{th}), by adding suitable coherences to such states, the same EW can be used to certify PPT-entanglement also in whole families of symmetric states. Below we provide one of these families.

\begin{corollary}
\label{ijppt}
Given a PPTEDS state, $\rho_{DS}$, any symmetric state of the form $\rho_{S}= \rho_{DS} + \sigma_{CS}$, with  $\sigma_{CS}=\sum_{i<j} (\alpha_{ij}\ket{D_{ii} }\bra{D_{jj}}+\emph{h.c.})$ and $|\alpha_{ij}| \leq \frac{p_{ij}}{2}$ is PPT-entangled.
\end{corollary}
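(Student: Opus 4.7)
The plan is to reduce the claim to Theorem \ref{th} by verifying that the constructed $\rho_S$ is PPT, since $\rho_{DS}$ is already assumed to be a PPTEDS. Everything thus hinges on showing that $\rho_S^{T_B} \succeq 0$ when the coherences $\alpha_{ij}$ obey $|\alpha_{ij}| \le p_{ij}/2$.

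First I would compute the partial transpose block-by-block in the computational basis. The DS part $\rho_{DS}^{T_B}$ has the structure of Eq.~\eqref{transco}: an $M_d(\rho_{DS})$ block on $\mathrm{span}\{\ket{ii}\}$ together with scalar blocks $p_{ij}/2$ on each pair $\{\ket{ij},\ket{ji}\}$ for $i\neq j$. The coherent part, writing $\ket{D_{ii}}=\ket{ii}$, becomes
\begin{equation*}
\sigma_{CS}^{T_B} = \sum_{i<j}\bigl(\alpha_{ij}\ket{ij}\bra{ji} + \alpha_{ij}^*\ket{ji}\bra{ij}\bigr),
\end{equation*}
which is supported entirely on the $2\times 2$ blocks $\{\ket{ij},\ket{ji}\}$ and vanishes on $\mathrm{span}\{\ket{ii}\}$.

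Next I would observe that the block decomposition is preserved. On $\mathrm{span}\{\ket{ii}\}$ the operator $\rho_S^{T_B}$ reduces to $M_d(\rho_{DS}) \succeq 0$, since $\rho_{DS}$ is PPT by hypothesis. On each $\{\ket{ij},\ket{ji}\}$ block one gets
\begin{equation*}
\begin{pmatrix} p_{ij}/2 & \alpha_{ij} \\ \alpha_{ij}^* & p_{ij}/2 \end{pmatrix},
\end{equation*}
with eigenvalues $p_{ij}/2 \pm |\alpha_{ij}|$, both non-negative precisely under the assumption $|\alpha_{ij}| \le p_{ij}/2$. Hence $\rho_S^{T_B}\succeq 0$.

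Finally, since $\rho_S = \rho_{DS} + \sigma_{CS}$ with $\sigma_{CS}$ traceless and consisting entirely of off-diagonal Dicke coherences, the hypothesis of Theorem \ref{th} is met, so $\rho_S$ is PPT entangled (detected by the same non-decomposable EW that detects $\rho_{DS}$, via $\mathrm{Tr}(W\rho_S)=\mathrm{Tr}(W\rho_{DS})<0$). I do not anticipate a serious obstacle: the only substantive step is the eigenvalue computation for the $\{\ket{ij},\ket{ji}\}$ blocks, which matches the constraint on $|\alpha_{ij}|$ exactly; the block-diagonal structure after partial transposition makes the positivity check essentially immediate.
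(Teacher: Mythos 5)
Your proof is correct and follows essentially the same route as the paper: you identify the same block decomposition of $\rho_S^{T_B}$ (the $M_d(\rho_{DS})$ block on $\mathrm{span}\{\ket{ii}\}$ plus the $2\times 2$ blocks with eigenvalues $p_{ij}/2 \pm |\alpha_{ij}|$) and then invoke Theorem \ref{th} via the same non-decomposable witness. Nothing is missing.
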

\begin{proof}

\noindent The state, $\rho_{S}$, and its partial transpose, $\rho_{S}^{T_{B}}$, can be cast  as
\begin{align}
    \label{rhoco}
&\rho_{S\;} = \tilde{M}_{d}(\rho_{S}) \bigoplus_{i<j} \frac{p_{ij}}{2} \begin{pmatrix*}
1 & 1 \\
1 & 1  
\end{pmatrix*}~,\\
&\rho_{S}^{T_{B}} = M_{d}(\rho_{DS}) \bigoplus_{0\le i < j < d} 
\begin{pmatrix*}
p_{ij}/2 & \alpha_{ij}\\
\alpha_{ij}^* & p_{ij}/2
\end{pmatrix*}~,
\end{align}
with 
\begin{equation*}
\tilde{M}_{d}(\rho_{S}) =
\begin{pmatrix*}
p_{00} & \alpha_{01} &  \cdots & \alpha_{0,d-1} \\
\alpha_{01}^* & p_{11} &  \cdots & \alpha_{1,d-1}\\
\vdots & \vdots & \vdots & \ddots &  \\
\alpha_{0,d-1}^* & \alpha_{1,d-1}^* & \cdots &p_{d-1,d-1}  
\end{pmatrix*}~,
\end{equation*}
\begin{equation*}
M_{d}(\rho_{DS}) =
\begin{pmatrix*}
p_{00} & p_{01}/2 &  \cdots & p_{0,d-1}/2 \\
p_{01}/2 & p_{11} &  \cdots & p_{1,d-1}/2\\
\vdots & \vdots & \vdots & \ddots &  \\
p_{0,d-1}/2 & p_{1,d-1}/2 & \cdots & p_{d-1,d-1}  
\end{pmatrix*}~.
\end{equation*}

\noindent Positive semidefiniteness of ${\rho_{S}}^{T_{B}}$ implies
$|\alpha_{ij}|\le \frac{p_{ij}}{2}$, so that the state $\rho_{S}$, generated from a PPTEDS state, remains PPT-entangled -- since it is detected by the same non-decomposable EW -- as long as the coherences respect the condition $|\alpha_{ij}| \leq \frac{p_{ij}}{2}$.
\end{proof}

A further connection between copositive matrices and EWs appears when considering {\it {extreme}} copositive matrices. For instance, let us consider the (generalized) Horn matrix $H_{\mathcal{H}}$ of Eq.\eqref{eq:hornodd}, and the so-called Hoffmann-Pereira matrix $H_{\mathcal{HP}}$ \cite{johnson2008constructing,hoffman1973copositive}, which, besides of being exceptional, is also extreme. For $d=7$, such copositive matrices take the form

\begin{equation}
H_{\mathcal{H}} =
\begin{pmatrix*}[r]
1 	&-1 & 1  &  1  & 1  & 1 & -1\\
-1 	& 1 &-1 &  1  & 1  &1  & 1 \\
1   &-1 & 1  & -1 & 1  &1  & 1\\
1 	& 1  &-1 &  1  & -1 &1  & 1 \\
1 	& 1  & 1  & -1 & 1  &-1  & 1\\
1 	& 1  & 1  & 1 & -1  &1  & -1\\
-1 	& 1  & 1  & 1 & 1  &-1  & 1
\end{pmatrix*}~,
\end{equation}
\begin{equation}
H_{\mathcal{HP}}=
\begin{pmatrix*}[r]
1 	&-1 & 1  &  0  & 0  & 1 & -1\\
-1 	& 1 &-1 &  1  & 0  &0  & 1 \\
1   &-1 & 1  & -1 & 1  &0  & 0\\
0 	& 1  &-1 &  1  & -1 &1  & 0 \\
0 	& 0  & 1  & -1 & 1  &-1  & 1\\
1 	& 0  & 0  & 1 & -1  &1  & -1\\
-1 	& 1  & 0  & 0 & 1  &-1  & 1
\end{pmatrix*}~.
\end{equation}
Let us inspect the action of both matrices, $H_\mathcal{H}$ and 
$H_\mathcal{HP}$, on a DS state $\rho_{DS}\in\mathcal{B}( \mathbb C^{7}\otimes \mathbb{C}^{7})$, described by its associated $M_7(\rho_{DS})$ (see Eq.\eqref{mr}):
\begin{equation}
M_{d}(\rho_{DS})=
\label{PPTEDS7}
\begin{pmatrix*}[c]
1 	& 1  & 0  & 0 & 0  &0   & 1/8\\
1 	& 2  &  1 &  0  & 0  &0   & 0\\
0  & 1  &  2  & 1  &0   &0   & 1/4\\
0 	& 0  & 1 &  2  & 1  &0   & 0\\
0 	& 0  & 0  & 1  & 2  &1   & 0\\
0 	& 0  & 0  & 0  & 1  &2   & 1\\
1/8	 & 0  & 1/4   & 0  &0   & 1  & 1
\end{pmatrix*}~.
\end{equation}
It can be easily checked that 
$\mbox{Tr}( H_\mathcal{HP}M_{d}(\rho_{DS}))= \mbox{Tr}(\left(H^{ext}_{\mathcal{HP}}\right)^{T_{B}} \rho_{DS})= -\frac{1}{4}$. Since both $H_{\mathcal{H}}$ and $H_\mathcal{HP}$ are exceptional copositive matrices,  $W_{\mathcal{HP}}=\left(H^{ext}_{\mathcal{HP}}\right)^{T_{B}}$  and $W_{\mathcal{H}}=\left(H^{ext}_{\mathcal{H}}\right)^{T_{B}}$ are non-decomposable EWs, so that $\rho_{DS}$ is a PPT entangled.

 In contrast,  $\mbox{Tr}( H_\mathcal{H}M_{d}(\rho_{DS}))=\mbox{Tr}( W_{\mathcal{H}} \rho_{S})=0$, indicating that $H_{\mathcal{H}}$ fails to detect this state. Moreover, as stated by Th.\eqref{th}, $W_\mathcal{HP}= (H^{ext}_{\mathcal{HP}} )^{T_{B}}$, detects, as well, many other states around the state given by Eq.\eqref{PPTEDS7}. 
Given the relationship between non-decomposable EWs and exceptional matrices, we conjecture that extremality in the copositive cone correspond to optimality in the set of EWs. In other words, copositive matrices that are both extreme and exceptional lead to optimal non-decomposable EWs in the sense of \cite{lewenstein2000optimization}. We complete this subsection with the following theorem regarding extreme copositive matrices (see Fig.\ref{fig:cop})

\begin{theorem}
\label{extreme}
Let $H$ be an extreme copositive matrix with at least one negative eigenvalue, and at least one negative element $H_{ij}<0$. Then $H$ must be exceptional.
\end{theorem}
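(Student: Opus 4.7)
The plan is to argue by contradiction, combining the definition of extremality of $H$ with the fact that both cones $\mathcal{PSD}_d$ and $\mathcal{N}_d$ sit inside $\mathcal{COP}_d$. Suppose $H$ is not exceptional; then by definition there exists a decomposition
\begin{equation*}
H = H_{\mathcal{PSD}} + H_{\mathcal{N}}, \qquad H_{\mathcal{PSD}}\in\mathcal{PSD}_d,\ H_{\mathcal{N}}\in\mathcal{N}_d.
\end{equation*}
Since every positive semidefinite matrix is copositive (the quadratic form is non-negative on all of $\mathbb{R}^d$, in particular on the non-negative orthant), and every entry-wise non-negative symmetric matrix is trivially copositive, this is a decomposition of $H$ as a sum of two copositive matrices.

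Now I invoke extremality: the definition forces $H_{\mathcal{PSD}}=aH$ and $H_{\mathcal{N}}=(1-a)H$ for some $a\in[0,1]$. I then split into three exhaustive cases on $a$. If $a=1$, then $H=H_{\mathcal{PSD}}\in\mathcal{PSD}_d$, contradicting the hypothesis that $H$ has a negative eigenvalue. If $a=0$, then $H=H_{\mathcal{N}}\in\mathcal{N}_d$ is entry-wise non-negative, contradicting the hypothesis $H_{ij}<0$. Finally, if $a\in(0,1)$, both contradictions occur simultaneously: $H=a^{-1}H_{\mathcal{PSD}}$ would still be positive semidefinite (since $a^{-1}>0$), and $H=(1-a)^{-1}H_{\mathcal{N}}$ would still be entry-wise non-negative, again contradicting either the negative eigenvalue or the negative entry of $H$.

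Since all three cases are impossible, no such decomposition can exist, so $H\in\mathcal{COP}_d\setminus(\mathcal{PSD}_d+\mathcal{N}_d)$, i.e., $H$ is exceptional.

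I do not expect any real obstacle here; the argument is essentially bookkeeping on the definitions. The only subtlety worth underlining in the write-up is that both hypotheses (a negative eigenvalue \emph{and} a negative off-diagonal entry) are genuinely needed: without the negative eigenvalue, the extremal matrix could be a rank-one PSD (hence in $\mathcal{PSD}_d\subset\mathcal{PSD}_d+\mathcal{N}_d$, trivially non-exceptional), and without the negative entry it could be an extreme ray of $\mathcal{N}_d$ such as $\ket{i}\!\bra{j}+\ket{j}\!\bra{i}$.
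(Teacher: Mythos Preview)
Your proof is correct and follows essentially the same approach as the paper's: assume a decomposition $H=H_{\mathcal{PSD}}+H_{\mathcal N}$, use extremality to force both summands to be non-negative scalar multiples of $H$, then derive a contradiction from the negative eigenvalue or the negative entry. Your case split on the scalar $a\in\{0\}\cup(0,1)\cup\{1\}$ is a slightly cleaner packaging of the same argument the paper gives in its appendix, and your closing remark on why both hypotheses are needed is a useful addition.
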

\begin{proof}
$H$ cannot belong to neither $ H_{PSD}$ nor to $H_N$ and, while it is extremal, it cannot be a combination of their elements as well. A more detailed proof can be found in \Cref{sec:appendixB}.
\end{proof}

\subsection {Symmetric PPTES in $d<5$}

 In what remains, we are interested in symmetric PPTES of the form $\rho_{S}=\rho_{DS}+ \sigma_{CS}$ where $\rho_{DS}$ is separable, so that $\mbox{Tr}( H M_{d}(\rho_{DS}))\geq 0$ for all copositive matrices $H$. Moreover, since for $d < 5$, every copositive matrix is non-exceptional, i.e., $H=H_{\mathcal{N}} + H_{\mathcal{PSD}}$, the corresponding witness $W= (H^{ext})^{T_{B}}$ will always be decomposable.  For this reason, coherences are needed to create PPTES in low dimensional systems. Here we show that such states symmetric PPTES can nevertheless be detected by EWs which are of the form $W_{S}= W + W_{CS}$, that is by adding to the decomposable EW, $W$, a convenient off-diagonal, symmetric contribution $W_{CS}$ which reads the coherences of $\rho_S$.
 
For the sake of simplicity, we hereby consider symmetric states of the form
\begin{equation}
\label{coho}
\rho_{S}= \rho_{DS} + \sigma_{CS} = \sum_{ij} p_{ij}\ket{D_{ij}}\bra{D_{ij}} + \sum_{i\neq j\neq k} (\alpha_{ijk} \ket{D_{ii}}\bra{D_{jk}})+ \mbox{h.c.}~,
 \end{equation}
 with $p_{ij}\geq 0$ $\forall i,j \;$, $\sum p_{ij}=1$ and $\alpha_{ijk}\in \mathbb C$.
 
 Indeed, in this case, both $\rho_{S}$ and $\rho_{S}^{T_{B}}$ can be cast as a direct sum of matrices, which highly simplifies our analysis. For instance, for $d=3$, $\rho_{S}$ and $\rho_{S}^{T_{B}}$ are of the form
\begin{equation}
\label{rankrho}
\rho_{S} =
\begin{pmatrix*}[c]
\frac{p_{02}}{2} & \alpha & \frac{p_{02}}{2} \\
\alpha^{*} & p_{11} & \alpha \\
\frac{p_{02}}{2} & \alpha^{*} & \frac{p_{02}}{2}
\end{pmatrix*}
\oplus 
\begin{pmatrix*}[c]
p_{00} & \beta & \beta\\
\beta^{*} &  \frac{p_{12}}{2} & \frac{p_{12}}{2} \\
\beta^{*} &  \frac{p_{12}}{2}  & \frac{p_{12}}{2} 
\end{pmatrix*}
\oplus
\begin{pmatrix*}[c]
\frac{p_{01}}{2} & \frac{p_{01}}{2} & \gamma \\
\frac{p_{01}}{2} &  \frac{p_{01}}{2} & \gamma \\
\gamma^{*} &  \gamma^{*}  & p_{22} 
\end{pmatrix*}~,
\end{equation}

\begin{equation}
\label{rankrhotb}
	\rho_{S}^{T_{B}} =
	M_{d}(\rho_{DS}) 
	\oplus 
	\begin{pmatrix*}[c]
	\frac{p_{01}}{2} & \alpha &\beta \\
 \alpha^{*} &  \frac{p_{12}}{2} &  \gamma \\
	\beta^{*} &   \gamma^{*}  & \frac{p_{02}}{2} 
	\end{pmatrix*} 
	\oplus
	\begin{pmatrix*}[c]
	\frac{p_{02}}{2} & \beta & \gamma \\
	\beta^{*}  &  \frac{p_{01}}{2} & \alpha \\
	\gamma^{*} &  \alpha^{*} & \frac{p_{12}}{2} 
	\end{pmatrix*}~,
\end{equation}
\noindent where we have defined, for easiness of reading, $\alpha \equiv \alpha_{120} = \alpha_{102}$, $\beta \equiv \alpha_{012}= \alpha_{021}$ and $\gamma \equiv \alpha_{201} = \alpha_{210}$. Such structure, which corresponds to a particular direct sum decomposition of the total Hilbert space, bears similitude with the so-called circulant states \cite{PhysRevA.76.032308}.

In order to investigate the existence of PPTES we focus on edge states, since their low-dimensional ranks allow for a simpler analysis. 
By using a notation common in the literature, we say that an edge state $\rho_{S}$ is of type $(p,q)$ if $p=r(\rho_{S})$ and $q=r(\rho_{S}^{T_{B}})$ are the ranks of $\rho_{S}$ and $\rho_{S}^{T_{B}}$, respectively.
 While symmetric states in $d=3$ are, generically, of type $(6,9)$, PPT-entangled edge states must have lower ranks. When dealing with states of the form of Eq.(\ref{coho}), we have found numerically that at least two coherences must be considered.
 For instance, we can set $\gamma = 0$ and choose $\alpha$ and $\beta$ in such a way to lower the value of $(p,q)$. 
Indeed, a direct inspection of Eqs.(\ref{rankrho})-(\ref{rankrhotb}), shows that, by setting $|\alpha|^2= p_{11}p_{02}/2$ and $|\beta|^2= p_{02}(p_{01}p_{12}-2p_{02}p_{11})/4 p_{12}$, it is possible to attain a state of type $(5,7)$. Now, starting from a copositve matrix $H$, we can construct a non-decomposable EW of the form
\begin{equation}
\label{wits}
W_{S}= \Pi_{S}(H^{ext})^{T_{B}}\Pi_{S} + \sum_{i\neq j \neq k} W^{i}_{jk}\ket{D_{ii}} \bra{D_{jk}}+ \mbox{h.c.}~,
\end{equation}
where the coefficients $W^{i}_{jk}$ can be chosen to be real. \\

Let us illustrate the above results by providing some explicit examples. We first consider the symmetric edge PPTES provided in \cite{toth2010separability}. The state is of the form in Eq.(\ref{coho}) for $d=3$ (\textit{i.e.,} of the form in Eq.(\ref{rankrho})), obtained from the DS state $\rho_{DS}$ with parameters $p_{00}=0.22, p_{11}=0.234/3, p_{22}=0.183, p_{01}=0.176, p_{02}=0.167/3, p_{12}=0.254$ and coherences $\alpha = 0.167/3, \beta = -0.059/\sqrt{2}, \gamma = 0 $.
Its entanglement was previously certified in \cite{toth2010separability} by means of the PPT symmetric extension proposed in \cite{doherty2004complete}.
With the aid of the same method \cite{doherty2002distinguishing}, we can derive as well the corresponding non-decomposable EW $W$ via semidefinite programming. The associated copositive matrix, $H$, is easily obtained from the properly symmetrized EW, i.e.,
$W_S=\Pi_S W \Pi_S$, and has the form
\begin{equation}
\label{with1}
H \approx
\left(
\begin{matrix}
0.003 &  10.39 & 100.57 \\
10.39 & 59.31 & -21.02 \\
 100.57 &-21.02 & 14.22
\end{matrix}
\right)~,
\end{equation}
while the coefficients $W^{i}_{jk}$ are given by $W^{1}_{02} =23.20$ and $W^{0}_{12} = -37.40$. 
If we restrict to the DS part of the state $\rho_S$, it is trivial to check that $\mbox{Tr}(H M_{d}(\rho_{DS}))\geq 0$. This is by no means a surprise, since for DS states, in $d<5$, the PPT condition implies separability.
For this reason, the coherences provided by the term $\sigma_{CS}$ are necessary to induce the PPT entanglement. 
Remarkably, one can vary the value of the coherences $\alpha$ and $\beta$ to obtain other symmetric PPTES as certified by the EW, i.e., $\mbox{Tr}(W_{S}\rho_{S})<0$. In fact, the EW $W_{S}$ can be used to derive families of PPT entangled states obtained by adding to the state $\rho_{S}$ any coherent contribution $\sigma_{CS}$ of the form of Eq.(\ref{rankrho}) that preserves the positivity of both the state and its partial transpose. Indeed, also in the case $\gamma \neq 0$, as long as the conditions $\rho_{S} \ge 0$ and $\rho_{S}^{T_{B}} \ge 0$ hold, the same non-decomposable EW $W_{S}$, is able to detect, for suitable values of its entries $W^{i}_{jk}$, a whole family of PPTES of the form of Eq.(\ref{rankrho}), as depicted in Fig.\ref{fig:wgamma}.

\begin{figure}[H]
	\centering
	\includegraphics[width=10cm]{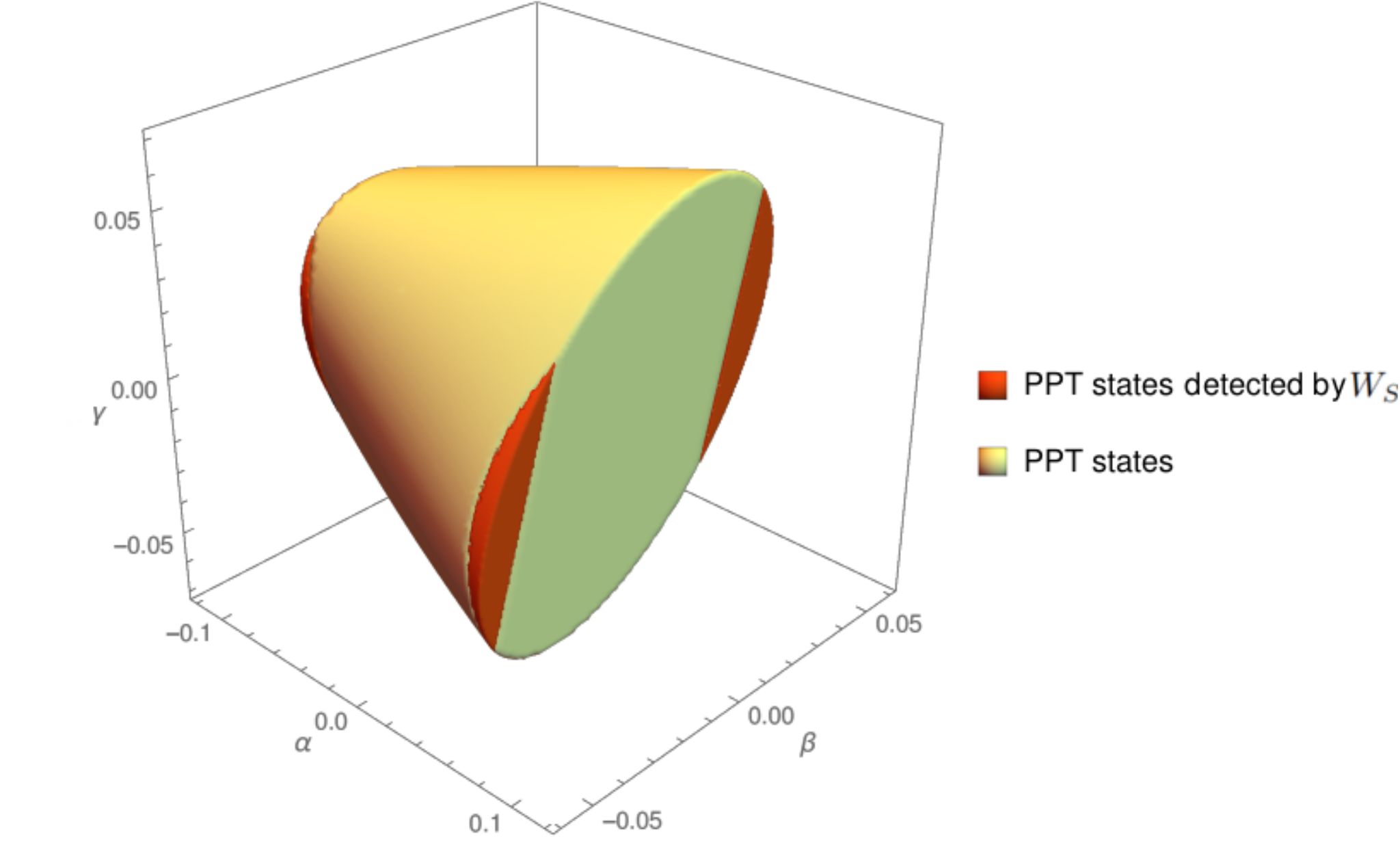}
	\caption{
	The PPT entangled states detected by $W_{S}$ of Eq.(\ref{wits}) with coefficients $|W^{1}_{02}| =23.20, |W^{0}_{12}| = -37.40$ (dark orange) as compared to the whole family of PPT states $\rho_{S}$ of Eq.(\ref{rankrho}) with $p_{00}=0.22, p_{11}=0.234/3, p_{22}=0.183, p_{01}=0.176, p_{02}=0.167/3, p_{12}=0.254$ (light orange). }
	\label{fig:wgamma}
\end{figure}

\begin{figure}[H]
	\centering
	\includegraphics[width=10cm]{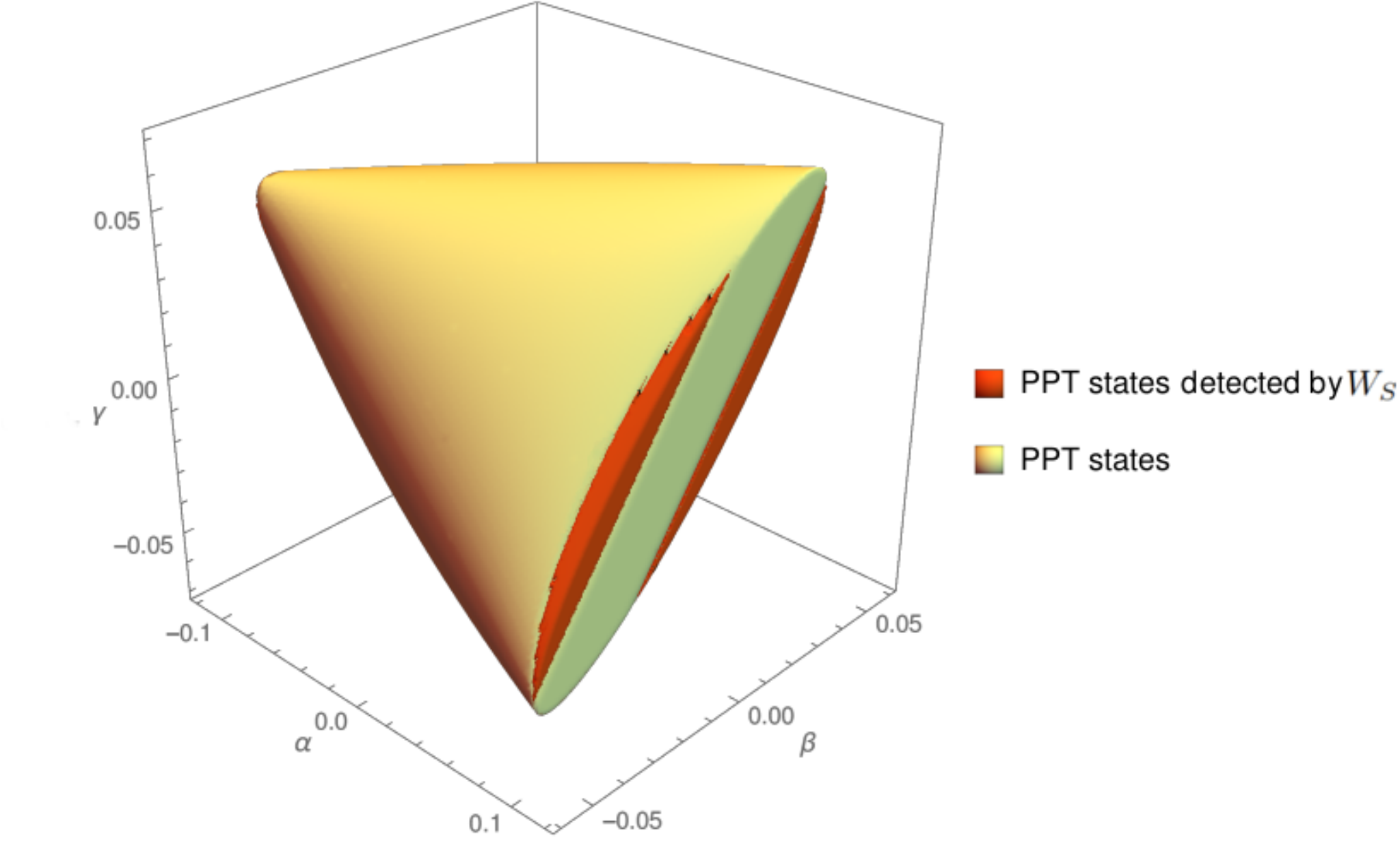}
	\caption{The PPT entangled states detected by $W_{S}$ of Eq.(\ref{wits}) with coefficients $|W^{1}_{02}| =\frac{4595}{191}, |W^{0}_{12}| = \frac{6114}{113}$ (dark orange) as compared to the whole family of PPT states $\rho_{S}$ of Eq.(\ref{rankrho}) with  $p_{00}= p_{11}= p_{12} = \frac{1848}{7625}$, $p_{22}= \frac{464}{7625}$, $p_{01}=\frac{231}{1525}$, $p_{02}=\frac{462}{7625}$ (light orange). }
	\label{fig:wgamma2}
\end{figure}
In Fig.\ref{fig:wgamma2}, we display a new example of a symmetric PPTES $\rho_{S}$ of the form of Eq.(\ref{rankrho}), found by semidefinite programming. Also in this case, we have found a non-decomposable EW $W_{S}$ of the form of Eq.(\ref{wits}), with coefficients $W^{1}_{02} =\frac{4595}{191}$ and $W^{0}_{12} = -\frac{6114}{113}$ and whose associated copositive matrix is given by

\begin{equation}
\label{with2}
H =
\left(
\begin{matrix}
 1/172 & 1009/151 & 11025/68 \\
 1009/151 & 1803/22 & -5829/65 \\
 11025/68 & -5829/65 & 1224/7
\end{matrix}
\right)~.
\end{equation}

Similarly, we can use the same procedure to derive families of PPT entangled symmetric states for $d=4$. In this case, we have found, numerically, that at least three different coherences of the form of Eq.(\ref{coho}) are needed in order to get a low-dimensional PPT entangled edge state. To the best of our knowledge, there are no explicit examples of symmetric PPT entangled states in $d=4$.
We provide an example in \Cref{sec:appendixA} together with the non-decomposable EW that detects it and its associated copositive matrix.

\section{Conclusions}
\label{sec:Conclusions}
In this work we have studied the connection between EWs and copositive matrices, showing how this class of matrices can be effectively used to detect entanglement in bipartite symmetric states.  First, we have proved that non-decomposable (decomposable) EWs can be derived from exceptional (non-exceptional) copositive matrices. Second, we have tackled the problem of the entanglement detection for two symmetric qudits. In this context, our analysis has shown that, on the one hand, for dimension $d \ge 5$, it is possible to certify bound entanglement in any family of symmetric states constructed by adding coherences to a PPTEDS state. Indeed, we have provided the explicit expression of a symmetric PPTES in $d=7$, along with the exceptional matrix that detects it. On the other hand, for dimension $d<5$, every copositive matrix is not exceptional, so that a different approach is needed in order to construct a non-decomposable EW. Nevertheless, we have shown that, also in this case, an EW of this type can be derived from a not exceptional copositive matrix by adding suitable coherent-like contributions. Indeed, we conjecture that $\textit{any}$ symmetric PPTES of two qudits can be detected by a non-decomposable EW of the form of Eq.(\ref{wits}). Our conjecture seems to be well grounded as pointed out by the several examples we have found of symmetric PPTES detected with this method, both for $d=3$ and $d=4$ (see Appendix \ref{sec:appendixA} for further details).

\appendix

\section{A symmetric PPTES in $d=4$}
\label{sec:appendixA}

Here we present an explicit example of a PPTES $\rho_S=\rho_{DS}+\sigma_{CS}$ of the form of Eq.(\ref{coho}) for $d=4$. The state is given by: i) a DS state $\rho_{DS}$ with $p_{00}= p_{02}= p_{03}=p_{11}=p_{22} = \frac{172+16 \sqrt{2}}{1817}$, $p_{01}=p_{13}=\frac{32+172 \sqrt{2}}{1817}$, $p_{11}=p_{12}=p_{23}=\frac{86+8\sqrt{2}}{1817}$, $p_{33}=\frac{721-440\sqrt{2}}{1817}$; and ii) a coherence term $\sigma_{CS}$ with $\alpha =p_{00}, \beta=-p_{01}/2$ and $\gamma=p_{01}/4$.

Again, to certify its entanglement we have used the symmetric extension \cite{doherty2002distinguishing}, which provides a non-decomposable EW, $W_S$, via semidefinite programming. Such EW is of the form of Eq.(\ref{wits}), with coefficients $W^{0}_{23}= \frac{6526}{321}, W^{1}_{03} = -\frac{1896}{107}, W^{2}_{13}=-\frac{549}{1238}$ and has an associated copositive matrix
\begin{equation}
\label{with3}
H =
\left(
\begin{matrix}
 21/3590 & 9425/1571 & 4853/464 & 1111/28 \\
 9425/1571 & 1293/88 & 2122/145 & 220/323 \\
 4853/464 & 2122/145 & 6/5951 & 1355/3014 \\
 1111/28 & 220/323  & 1355/3014 &  862/7403
\end{matrix}
\right)~.
\end{equation}
Let us observe that, despite the fact that $H$ of Eq.(\ref{with3}) does not have any negative matrix element, the corresponding $EW$ has nevertheless a negative eigenvalue. This observation makes clearer, once more, the fact that, in $d<5$, differently from the case $d\ge5$, the possibility to detect a PPTES relies exclusively on a convenient choice of the coherences $W^{i}_{jk}$.

\section{An alternative proof of \Cref{extreme}}
\label{sec:appendixB}
\begin{proof}
Since $H$ is an extreme copositive matrix, it only admits the trivial decomposition $H = H_{1}+ H_{2}$ with $H_{1} = a H$ and $H_{2}= (1-a) H$ copositive, for every $a \in [0,1]$. Let $H= H_{\mathcal{PSD}} + H_{\mathcal{N}}$ where $H_{\mathcal{PSD}}$ is a positive semidefinite matrix and $H_{\mathcal{N}}$ is a non-negative matrix. We are left with three possibilities: i) $H_{1} = H_{\mathcal{PSD}}$ and $H_{2} = H_{\mathcal{N}}$, ii) $H_{1} = H_{\mathcal{N}}$ and $H_{2} = H_{\mathcal{PSD}}$, iii) $H = H_{1}+H_{2}= H_{\mathcal{N}}$. i) Since $H$ has at least one negative eigenvalue, the same holds true also for $H_{1}=a H$, so that $H_{1}$ is not positive semidefinite. ii) An analogous consideration on $H_{2}$ leads to the same conclusion also in this case, so that $H$ is exceptional. iii) $H = H_{1}+H_{2}= H_{\mathcal{N}}$. Since $H$ has at least one negative entry $H_{ij}<0$ it cannot be a non-negative matrix.
\end{proof}

\acknowledgments{We acknowledge financial support from ERC-AdG NOQIA; Spanish Agencia Estatal de Investigación (AEI): Severo Ochoa Grant No. CEX2019-000910-S, PID2019-107609GB-100, PID2019-106901GB-I00/10.13039 / 501100011033, MINCIN-EU QuantERA MAQS PCI2019-111828-2, 10.13039/501100011033; Generalitat de Catalunya: AGAUR 2017-SGR-1341,2017-SGR-1127, CERCA Program and  QuantumCAT 001-P-001644, QuantumCAT U16-011424 co-funded by ERDF Operational Program of Catalonia 2014-2020; Fundaci\'o Privada Cellex; Fundaci\'{o} Mir-Puig; EU Horizon 2020 FET-OPEN OPTOLogic: Grant No. 899794; and the National Science Centre Poland-Symfonia Grant No. 2016-20-W-ST4-00314.
J.T. thanks the Alexander von Humboldt foundation for support.}

\bibliographystyle{unsrtnat}
\bibliography{copositive}

\begin{thebibliography}{29}
\providecommand{\natexlab}[1]{#1}
\providecommand{\url}[1]{\texttt{#1}}
\expandafter\ifx\csname urlstyle\endcsname\relax
  \providecommand{\doi}[1]{doi: #1}\else
  \providecommand{\doi}{doi: \begingroup \urlstyle{rm}\Url}\fi

\bibitem[Horodecki et~al.(2009)Horodecki, Horodecki, Horodecki, and
  Horodecki]{horodecki2009quantum}
Ryszard Horodecki, Pawe{\l} Horodecki, Micha{\l} Horodecki, and Karol
  Horodecki.
\newblock Quantum entanglement.
\newblock \emph{Reviews of modern physics}, 81\penalty0 (2):\penalty0 865,
  2009.
\newblock \doi{10.1103/RevModPhys.81.865}.

\bibitem[Bennett et~al.(1996)Bennett, Bernstein, Popescu, and
  Schumacher]{bennett1996concentrating}
Charles~H Bennett, Herbert~J Bernstein, Sandu Popescu, and Benjamin Schumacher.
\newblock Concentrating partial entanglement by local operations.
\newblock \emph{Physical Review A}, 53\penalty0 (4):\penalty0 2046, 1996.
\newblock \doi{10.1103/PhysRevA.53.2046}.

\bibitem[Gurvits(2003)]{gurvits2003classical}
Leonid Gurvits.
\newblock Classical deterministic complexity of edmonds' problem and quantum
  entanglement.
\newblock In \emph{Proceedings of the thirty-fifth annual ACM symposium on
  Theory of computing}, pages 10--19, 2003.
\newblock \doi{10.1145/780542.780545}.

\bibitem[Peres(1996)]{peres1996separability}
Asher Peres.
\newblock Separability criterion for density matrices.
\newblock \emph{Physical Review Letters}, 77\penalty0 (8):\penalty0 1413, 1996.
\newblock \doi{10.1103/PhysRevLett.77.1413}.

\bibitem[Terhal and Vollbrecht(2000)]{terhal2000entanglement}
Barbara~M Terhal and Karl Gerd~H Vollbrecht.
\newblock Entanglement of formation for isotropic states.
\newblock \emph{Physical Review Letters}, 85\penalty0 (12):\penalty0 2625,
  2000.
\newblock \doi{10.1103/PhysRevLett.85.2625}.

\bibitem[Lewenstein et~al.(2000)Lewenstein, Kraus, Cirac, and
  Horodecki]{lewenstein2000optimization}
Maciej Lewenstein, Barabara Kraus, J~Ignacio Cirac, and P~Horodecki.
\newblock Optimization of entanglement witnesses.
\newblock \emph{Physical Review A}, 62\penalty0 (5):\penalty0 052310, 2000.
\newblock \doi{10.1103/PhysRevA.62.052310}.

\bibitem[Chru{\'s}ci{\'n}ski and Sarbicki(2014)]{chruscinski2014entanglement}
Dariusz Chru{\'s}ci{\'n}ski and Gniewomir Sarbicki.
\newblock Entanglement witnesses: construction, analysis and classification.
\newblock \emph{Journal of Physics A: Mathematical and Theoretical},
  47\penalty0 (48):\penalty0 483001, 2014.
\newblock \doi{10.1088/1751-8113/47/48/483001}.

\bibitem[Lewenstein et~al.(2001)Lewenstein, Kraus, Horodecki, and
  Cirac]{lewenstein2001characterization}
Maciej Lewenstein, B~Kraus, P~Horodecki, and JI~Cirac.
\newblock Characterization of separable states and entanglement witnesses.
\newblock \emph{Physical Review A}, 63\penalty0 (4):\penalty0 044304, 2001.
\newblock \doi{10.1103/PhysRevA.63.044304}.

\bibitem[Brandao(2005)]{brandao2005quantifying}
Fernando~GSL Brandao.
\newblock Quantifying entanglement with witness operators.
\newblock \emph{Physical Review A}, 72\penalty0 (2):\penalty0 022310, 2005.
\newblock \doi{10.1103/physreva.72.022310}.

\bibitem[Vollbrecht and Werner(2001)]{vollbrecht2001entanglement}
Karl Gerd~H Vollbrecht and Reinhard~F Werner.
\newblock Entanglement measures under symmetry.
\newblock \emph{Physical Review A}, 64\penalty0 (6):\penalty0 062307, 2001.
\newblock \doi{10.1103/PhysRevA.64.062307}.

\bibitem[T{\'o}th and G{\"u}hne(2010)]{toth2010separability}
G{\'e}za T{\'o}th and Otfried G{\"u}hne.
\newblock Separability criteria and entanglement witnesses for symmetric
  quantum states.
\newblock \emph{Applied Physics B}, 98\penalty0 (4):\penalty0 617--622, 2010.
\newblock \doi{10.1007/s00340-009-3839-7}.

\bibitem[Eggeling and Werner(2001)]{eggeling2001separability}
Tilo Eggeling and Reinhard~F Werner.
\newblock Separability properties of tripartite states with u $\otimes$ u
  $\otimes$ u $\otimes$ symmetry.
\newblock \emph{Physical Review A}, 63\penalty0 (4):\penalty0 042111, 2001.
\newblock \doi{10.1103/physreva.63.042111}.

\bibitem[Tura et~al.(2018)Tura, Aloy, Quesada, Lewenstein, and
  Sanpera]{tura2018separability}
Jordi Tura, Albert Aloy, Ruben Quesada, Maciej Lewenstein, and Anna Sanpera.
\newblock Separability of diagonal symmetric states: a quadratic conic
  optimization problem.
\newblock \emph{Quantum}, 2:\penalty0 45, 2018.
\newblock \doi{10.22331/q-2018-01-12-45}.

\bibitem[Sanpera et~al.(2001)Sanpera, Bru{\ss}, and
  Lewenstein]{sanpera2001schmidt}
Anna Sanpera, Dagmar Bru{\ss}, and Maciej Lewenstein.
\newblock Schmidt-number witnesses and bound entanglement.
\newblock \emph{Physical Review A}, 63\penalty0 (5):\penalty0 050301, 2001.
\newblock \doi{10.1103/PhysRevA.63.050301}.

\bibitem[Clarisse(2006)]{clarisse20065}
Lieven Clarisse.
\newblock Construction of bound entangled edge states with special ranks.
\newblock \emph{Physics Letters A}, 359\penalty0 (6):\penalty0 603--607, 2006.
\newblock \doi{10.1016/j.physleta.2006.07.045}.

\bibitem[Kye and Osaka(2012)]{kye2012classification}
Seung-Hyeok Kye and Hiroyuki Osaka.
\newblock Classification of bi-qutrit positive partial transpose entangled edge
  states by their ranks.
\newblock \emph{Journal of mathematical physics}, 53\penalty0 (5):\penalty0
  052201, 2012.
\newblock \doi{10.1063/1.4712302}.

\bibitem[Chen and {\DJ}okovi{\'c}(2011)]{chen2011description}
Lin Chen and Dragomir~{\v{Z}} {\DJ}okovi{\'c}.
\newblock Description of rank four entangled states of two qutrits having
  positive partial transpose.
\newblock \emph{Journal of mathematical physics}, 52\penalty0 (12):\penalty0
  122203, 2011.
\newblock \doi{10.1063/1.3663837}.

\bibitem[Leinaas et~al.(2010)Leinaas, Myrheim, and Sollid]{magne2010numerical}
Jon~Magne Leinaas, Jan Myrheim, and Per~\O{}yvind Sollid.
\newblock Low-rank extremal positive-partial-transpose states and unextendible
  product bases.
\newblock \emph{Phys. Rev. A}, 81:\penalty0 062330, Jun 2010.
\newblock \doi{10.1103/PhysRevA.81.062330}.

\bibitem[Yu(2016)]{yu2016separability}
Nengkun Yu.
\newblock Separability of a mixture of dicke states.
\newblock \emph{Physical Review A}, 94\penalty0 (6):\penalty0 060101, 2016.
\newblock \doi{10.1103/PhysRevA.94.060101}.

\bibitem[Murty and Kabadi(1987)]{murty1985some}
Katta~G. Murty and Santosh~N. Kabadi.
\newblock Some np-complete problems in quadratic and nonlinear programming.
\newblock \emph{Mathematical Programming}, 39:\penalty0 117--129, 1987.
\newblock \doi{10.1007/BF02592948}.

\bibitem[Ping and Yu(1993)]{ping1993criteria}
Li~Ping and Feng~Yu Yu.
\newblock Criteria for copositive matrices of order four.
\newblock \emph{Linear algebra and its applications}, 194:\penalty0 109--124,
  1993.
\newblock \doi{10.1016/0024-3795(93)90116-6}.

\bibitem[Hiriart-Urruty and Seeger(2010)]{hiriart2010variational}
J-B Hiriart-Urruty and Alberto Seeger.
\newblock A variational approach to copositive matrices.
\newblock \emph{SIAM review}, 52\penalty0 (4):\penalty0 593--629, 2010.
\newblock \doi{10.1137/090750391}.

\bibitem[Diananda(1962)]{diananda1962non}
Palahenedi~Hewage Diananda.
\newblock On non-negative forms in real variables some or all of which are
  non-negative.
\newblock In \emph{Mathematical Proceedings of the Cambridge Philosophical
  Society}, volume~58, pages 17--25. Cambridge University Press, 1962.
\newblock \doi{10.1017/s0305004100036185}.

\bibitem[Hall and Newman(1963)]{hall1963copositive}
Marshall Hall and Morris Newman.
\newblock Copositive and completely positive quadratic forms.
\newblock In \emph{Mathematical Proceedings of the Cambridge Philosophical
  Society}, volume~59, pages 329--339. Cambridge University Press, 1963.
\newblock \doi{10.1017/s0305004100036951}.

\bibitem[Johnson and Reams(2008)]{johnson2008constructing}
Charles Johnson and Robert Reams.
\newblock Constructing copositive matrices from interior matrices.
\newblock \emph{The Electronic Journal of Linear Algebra}, 17:\penalty0 9--20,
  2008.
\newblock \doi{10.13001/1081-3810.1245}.

\bibitem[Hoffman and Pereira(1973)]{hoffman1973copositive}
Alan~J Hoffman and Francisco Pereira.
\newblock On copositive matrices with- 1, 0, 1 entries.
\newblock \emph{Journal of Combinatorial Theory, Series A}, 14\penalty0
  (3):\penalty0 302--309, 1973.
\newblock \doi{10.1016/0097-3165(73)90006-x}.

\bibitem[Chru\ifmmode \acute{s}\else \'{s}\fi{}ci\ifmmode~\acute{n}\else
  \'{n}\fi{}ski and Kossakowski(2007)]{PhysRevA.76.032308}
Dariusz Chru\ifmmode \acute{s}\else \'{s}\fi{}ci\ifmmode~\acute{n}\else
  \'{n}\fi{}ski and Andrzej Kossakowski.
\newblock Circulant states with positive partial transpose.
\newblock \emph{Phys. Rev. A}, 76:\penalty0 032308, Sep 2007.
\newblock \doi{10.1103/PhysRevA.76.032308}.

\bibitem[Doherty et~al.(2004)Doherty, Parrilo, and
  Spedalieri]{doherty2004complete}
Andrew~C Doherty, Pablo~A Parrilo, and Federico~M Spedalieri.
\newblock Complete family of separability criteria.
\newblock \emph{Physical Review A}, 69\penalty0 (2):\penalty0 022308, 2004.
\newblock \doi{10.1103/PhysRevA.69.022308}.

\bibitem[Doherty et~al.(2002)Doherty, Parrilo, and
  Spedalieri]{doherty2002distinguishing}
Andrew~C Doherty, Pablo~A Parrilo, and Federico~M Spedalieri.
\newblock Distinguishing separable and entangled states.
\newblock \emph{Physical Review Letters}, 88\penalty0 (18):\penalty0 187904,
  2002.
\newblock \doi{10.1103/physrevlett.88.187904}.

\end{thebibliography}

\end{document}